\documentclass[prodmode,acmtkdd]{acmsmall}

\usepackage[hidelinks]{hyperref}
\usepackage{subfig}
\usepackage{indentfirst}
\usepackage{amsfonts, amssymb, amsmath, amscd}
\usepackage{euscript}
\usepackage{algorithm}
\usepackage[noend]{algpseudocode}

\hypersetup {
  colorlinks = false,
  citecolor = black,
  filecolor = black,
  linkcolor = black,
  urlcolor = black,
  pdfauthor = {Kostyantyn Demchuk and Douglas J. Leith},
  pdftitle = {Kostyantyn Demchuk and Douglas J. Leith - A Fast Minimal
  Infrequent Itemset Mining Algorithm},
  pdfsubject = {A Fast Minimal Infrequent Itemset Mining Algorithm},
  pdfkeywords = {infrequent itemset mining, rare patterns, data mining,
  algorithm, breadth-first, frequency-based analysis, $k$-anonymity,
  performance, load balancing, applied mathematics, java, c++},
  pdfpagemode = UseNone
}

\algrenewcommand{\algorithmicforall}{\textbf{foreach}}
\def\ForEach{\ForAll}
\algrenewcommand\algorithmicindent{1.0em}
\algnewcommand{\LineComment}[1]{\State \(\triangleright\) #1}
\newcommand\NoThen{\renewcommand\algorithmicthen{}}
\newcommand\ReThen{\renewcommand\algorithmicthen{\textbf{then}}}

  \markboth{Demchuk et al.}{A Fast Minimal Infrequent Itemset Mining Algorithm}

  \title{A Fast Minimal Infrequent Itemset Mining Algorithm\thanks{This work
  was supported by an IBM PhD Fellowship and by Science Foundation Ireland
  under Grant No. 11/PI/1177.}}

  \author{KOSTYANTYN DEMCHUK and DOUGLAS J. LEITH
  \affil{NUI Maynooth, Ireland}}

  \begin{abstract}
    A novel fast algorithm for finding quasi identifiers in large datasets is
    presented. Performance measurements on a broad range of datasets
    demonstrate substantial reductions in run-time relative to the state of the
    art and the scalability of the algorithm to realistically-sized datasets up
    to several million records.
  \end{abstract}

  \keywords{itemset mining, breadth-first algorithm, frequency-based analysis,
  $k$-anonymity, performance, load balancing.}

  \acmformat{Kostyantyn Demchuk and Douglas J. Leith. 2014. Fast Minimal
  Infrequent Itemset Mining Algorithm}

  \makeatletter
  \g@addto@macro{\maketitle}{\@thanks}
  \makeatother

\begin{document}

  \maketitle

\section{Introduction}

  In this paper we introduce a new algorithm, called Kyiv, for finding all
  minimal attribute combinations occurring with less than a specified frequency
  within a data set. On realistic data sets this algorithm is demonstrated to
  be considerably faster than state of the art algorithms.

  One application of this algorithm is in statistical disclosure control
  \cite{suda2,minit,merrett04,eurostat07,sdcmicro14}. In statistical disclosure
  control the released data, for example census microdata, is required to be
  suitably anonymised. Of particular concern is the removal of
  quasi-identifiers \emph{i.e.} a subset of attribute values that can uniquely
  identify one or more entries in a data set. Even apparently innocuous data
  can act as a quasi-identifier when multiple values are combined together. For
  example, the seminal study of Sweeney \cite{sweeney02} showed that $87\%$ of
  the US population are uniquely identified by the three attributes gender, zip
  code and date of birth and demonstrated the use of this fact to de-anonymise
  published health data. It is therefore of fundamental interest to enumerate
  those combinations of entries within a dataset which occur either uniquely or
  sufficiently infrequently.

  Other applications of our algorithm include rare itemset mining
  \cite{koh05,tsang11,szathmary10,tsang13}. In rare itemset mining the aim is
  to discover unusual, but informative, relationships between entries in a data
  set. This is in contrast to frequent itemset mining where the interest is in
  discovering relationships which are common within a data set. Rare but
  interesting items might for example include adverse drug reactions within
  medical data \cite{ji13} and attacker intrusion within network data
  \cite{rahman08,luna10,hommes12} \emph{etc}. Since rare items are, by
  definition, infrequent, a direct approach to discovery is to enumerate the
  infrequent items and then search for informative relationships, \emph{e.g.}
  those which are of sufficiently high confidence, within this enumerated set.

  The main contributions of the paper are as follows. We introduce a new
  algorithm for minimal infrequent itemset mining, in both sequential and
  parallel form. The main practical contribution is the speed up of almost two
  orders of magnitude offered by the proposed algorithm on datasets of
  realistic complexity. Since execution time is currently the primary
  bottleneck in finding minimally infrequent itemsets, this is a significant
  step forward. The main algorithmic novelty (from which the speed up arises)
  is that by an appropriate choice of data structures and algorithmic
  formulation the support item test for minimality can be performed in a hugely
  more efficient manner (essentially with zero cost) than previously possible.
  A second algorithmic contribution lies in the parallel implementation. Unlike
  some previous approaches, the proposed approach elegantly allows the work
  load of parallel threads to be balanced so as to be approximately the same.
  This means that no single thread becomes the performance bottleneck and
  therefore ensures better scalability. We note that the speed up in execution
  time comes at the cost of much higher memory usage. However, since available
  memory size continues to grow year on year while processor speed has largely
  stagnated in many practical applications this trade-off of memory for speed
  is a favourable one. The new algorithm design is underpinned by new analytic
  results, the main analytic contribution lying in Lemma \ref{lem} and
  Corollary \ref{cor}. We present experimental measurements evaluating the
  performance of the proposed algorithm on a range of synthetic and application
  datasets, and compare this against the performance of the popular algorithm
  MINIT \cite{minit} and of the recently proposed MIWI Miner algorithm
  \cite{cagliero13}.

  \subsection{Motivating Example}

  \begin{table}
    \centering
    \begin{tabular}{| c | p{5.5cm} | p{2cm}  | l |}
    \hline
    ID & Query & Date & Link Clicked \\
    \hline
    3302 & uterine bleeding and coumadin & 2006-03-23 11:23:35 &
    www.nlm.nih.gov \\
    3302 & children who have died from moms postpartum depression & 2006-03-24
    15:41:21 & www.cbsnews.com \\
    6993 & american heart association & 2006-03-23 18:29:34 &
    www.americanheart.org \\
    6993 & high blood pressure & 2006-03-23 18:37:10 & \\
    7005 & notice of demand to pay judgment form & 2006-03-21 18:49:01 &
    www.sba.gov \\
    7005 & free personal credit report & 2006-03-20 11:26:42 & www.experian.com
    \\
    4417749 & shadow lake subdivision gwinnett county georgia & 2006-04-24
    21:48:01 & \\
    4417749 & jarrett t. arnold eugene oregon & 2006-03-23 21:48:01 &
    www2.eugeneweekly.com \\
    \hline
    \end{tabular}
    \caption{Extracts from AOL web search dataset}
    \label{tab:one}
  \end{table}

  In 2006 AOL released web search log data in which user identities had been
  concealed (replaced by unique identity numbers) but other data was left
  unchanged. Table \ref{tab:one} presents some entries from this AOL data set.
  It can be seen that the search queries and pages clicked are potentially
  sensitive in nature and it was further demonstrated that de-anonymisation of
  users was possible \emph{e.g.} that user \#4417749 was Thelma Arnold
  \cite{nyt}.

  We consider quasi-identifiers within the search data for the first 65,517
  users in more detail. These users carried out 3,558,412 searches using
  1,216,655 distinct queries. Of these queries, 736,967 occur only once within
  the data set and so are potential quasi-identifiers. Restricting
  consideration to the first three words of each query reduces the number of
  unique queries to 617,510, while restricting to the first two words reduces
  this to 488,138 and restricting to the first word only yields 276,074 unique
  queries. Hence, it can be seen that simply truncating the search queries is
  not sufficient to prevent a large number of the search queries from acting as
  quasi-identifiers.

  One simple and direct approach to masking these unique queries is to group
  unique queries together into sets of queries where each set consists of $k$
  unique queries, $k$ being a design parameter. In the data set we now replace
  the query by a reference to the set containing the query. In this way it is
  ensured that every query value in the modified data set occurs at least $k$
  times within the data set. We performed this data transformation on the AOL
  data using a value $k = 5$. In addition, we performed a similar
  transformation to the web page clicked by a user following a query, also with
  $k = 5$. After these changes each query value and each web page clicked value
  occurs at least $k = 5$ times within the modified data set. Nevertheless,
  when this query value is combined with the web page clicked value 586,698 of
  these pairs are still unique within the modified data set. In the unmodified
  data set there are 1,030,387 unique pairs, so the grouping of query of page
  clicked values has also reduced the number of unique pairs. However, in view
  of the large value of unique pairs it is evidently not sufficient to just
  consider individual entries but rather it is also necessary to consider
  combinations of entries when anonymising a data set.

  The difficulty with considering combinations of entries is that the number of
  combinations to be tested grows combinatorially and so in realistically sized
  data sets highly efficient algorithms are needed to test even combinations of
  3 or 4 entries. One solution to this combinatorial growth is to use sampling.
  For example, a subset of entries may be drawn uniformly at random from the
  full dataset, the number of attribute combinations occurring with less than a
  specified frequency within this subset determined and then this information
  is statistically extrapolated to the full dataset. Sampling reduces the
  computational burden but also carries the obvious risk of missing
  infrequently occurring entries. More efficient algorithms allow consideration
  of larger samples and so potentially significantly reduce this risk.

  Note that the set of unique or sufficiently infrequently occuring
  combinations of items within a data set is useful not just for verifying that
  restrictions on quasi-identifiers are respected by a data set but, when
  quasi-identifiers are present, this set is also useful as input to tools such
  as that in \cite{sigmod05} for modifying the data that require prior
  knowledge of the fields which act as quasi-identifiers. In the above AOL
  example the set of unique combinations is the precisely set of elements from
  which grouped values need to be constructed.

\section{Related Work} \label{sec:related}


  The first algorithm for unique itemset mining (the extreme case of infrequent
  itemset mining) appears to be SUDA (special unique detection algorithm)
  proposed in \cite{suda}. This was followed shortly afterwards by the
  development of the SUDA2 algorithm \cite{suda2-jnl,suda2}, which uses a
  recursive depth-first search approach to generate candidate itemsets from the
  database of interest (thus every candidate itemset exists in the database)
  and then efficiently tests these for uniqueness and minimality. SUDA2 lends
  itself readily to parallelisation by allocating disjoint subtrees to
  different threads which then carry out a depth-first search on the subtree.
  However, the work allocated amongst threads may be imbalanced depending on
  the size and complexity of the subtree assigned to a thread, leading to
  performance being constrained by the slowest running thread. A number of
  mitigating strategies are therefore summarised in \cite{haglin09}. SUDA2 is
  available in the sdcMicro package for R \cite{sdcmicro} and is essentially
  the state-of-the-art algorithm in this area, being used by the UK and
  Australian national statistics offices \cite{haglin09} and supported by IHSN
  (International Household Survey Network).

  Early work on infrequent (rather than only minimal) itemset mining initially
  made use of variants of the Apriori algorithm for frequent itemset mining,
  see \cite{dong07} and references therein, but quickly moved on to algorithms
  specifically tailored to the infrequent mining task. Almost simultaneously
  three specialised infrequent itemset algorithms were proposed by
  \cite{zhou07}, \cite{szathmary07} and \cite{minit}. In \cite{zhou07} a hash
  based scheme referred to as HBS is proposed to mine association rules among
  rare items, involving a direct search of item sequences contained in a
  database with pruning based on frequency. In \cite{szathmary07} an algorithm
  referred to as ARIMA (a rare itemset miner algorithm) is proposed, and later
  refined in \cite{szathmary12} by the addition of a depth-first search to
  exclude frequent itemsets. In \cite{minit} the MINIT (minimal infrequent
  itemsets) algorithm is proposed. MINIT uses a recursive depth-first search
  with pruning, similarly to the SUDA2 algorithm developed by the same group,
  and is often used as the baseline algorithm against which the performance of
  other infrequent mining algorithms is compared. In \cite{troiano09,troiano13}
  a breadth-first algorithm, Rarity, aiming at finding not necessarily minimal
  infrequent itemsets, is introduced. Whereas other algorithms start from small
  itemsets and increase the size as they search, Rarity takes the opposite
  approach and proceeds from large itemsets to smaller ones (referred to in
  \cite{troiano09,troiano13} as a top-down strategy). In \cite{gupta11} a
  pattern-growth recursive depth-first approach is proposed for minimal
  infrequent itemset mining and two algorithms called IFP\_min and IFP\_MLMS
  (multiple level minimum support) are introduced. It is observed that there
  exists a frequency threshold below which MINIT generally outperforms IFP\_min
  and above which IFP\_min outperforms MINIT. IFP\_min is also observed to
  outperform MINIT for large dense datasets. Recently, \cite{cagliero13}
  extends consideration to the more general task of discovering infrequent
  weighted itemsets (IWI) and introduces an algorithm called MIWI (minimal IWI)
  Miner. When a weighting of unity is associated with every itemset then this
  reduces to the infrequent itemset mining problem. For the datasets
  considered, MIWI Miner is demonstrated to significantly outperform MINIT for
  infrequent itemset mining. However, it is worth noting that the performance
  comparison in \cite{cagliero13} is made only for a small number of datasets.

\section{Preliminaries} \label{sec:prelim}

  A dataset $A$ is a table with $n$ rows and $m$ columns. The columns in this
  table contain categorical or finite range continuous data (such as age,
  income, zip code \emph{etc}). Formally,

  \begin{definition}[Item]
    An item $a$ is a triple $(v, j_a, R_a)$ in $A$, where $v \in \mathbb{N}$ is
    its value, $j_a \in \{ 1, \dots, m \}$ is the column of $A$ containing $v$,
    and $R_a \subseteq \{ 1, \dots, n \}$ is the set of $A$ rows in which the
    item appears.
  \end{definition}

  Note that the column in which it appears distinguishes an item, the same
  value appearing in two different columns being treated as two different
  items. This is in line with previous work on infrequent itemset mining. Also
  observe that we consider items with values from the field of positive integer
  (natural) numbers $\mathbb{N}$, but since any countable set can be mapped on
  to the integers this restriction is mild (while real values are excluded,
  finite-precision values are admissible).

  Let $I_A$ denote the set of all items in $A$. We define the frequency and
  uniformity of items in the natural way, as follows:

  \begin{definition}[Frequency]
    An itemset $I \subseteq I_A$ is a set of items. A $k$-itemset refers to an
    itemset of cardinality $k$. We let $R_I = \bigcap_{a \in I} R_a$ denote the
    set of rows in which all items of $I$ appear, and we refer to $|R_I|$ as
    the frequency of itemset $I$.
  \end{definition}

  \begin{definition}[$\tau$-Infrequency]
    An item $a \in I_A$ is $\tau$-infrequent if it has frequency less than
    $\tau$ \emph{i.e.} $|R_a| \le \tau$ and so the item occurs in $\tau$ or
    fewer rows of the dataset. We let $r_{A,\tau} \subseteq I_A$ denote the
    set of $\tau$-infrequent items in $I_A$. Unless otherwise stated, we
    confine consideration to $\tau$ values less than $n$, since trivially all
    elements of the dataset are $n$-infrequent. Usually $0 < \tau \ll n$.
  \end{definition}

  \begin{definition}[Uniqueness]
    An item $a \in I_A$ is unique if it is $1$-infrequent. That is, $|R_a| = 1$
    and so the item occurs in dataset $A$ in exactly one row. We let $\delta_A
    \subseteq I_A$ denote the set of unique items in $I_A$.
  \end{definition}

  \begin{definition}[Uniformity]
    Let $B \subseteq \{ 1, \dots, n \}$ be a subset of row indices from dataset
    $A$, and let $I_B = \{ a \in I_A : R_a \cap B \ne \emptyset \}$. An item
    $a$ is said to be uniform in $I_B$ if $|R_a \cap B| = |B|$. That is, item
    $a$ occurs in every row of subtable $B$. We let $U_A = \{ a \in I_A : |R_a|
    = n \}$ denote the set of uniform items in $I_A$.
  \end{definition}

  \begin{example} \label{ex1}
    For dataset
    \begin{align*}
      A &= 
      \begin{bmatrix}
        1 & 2 & 3 & 4 \\
        1 & 2 & 7 & 4 \\
        1 & 6 & 3 & 4 \\
        5 & 2 & 3 & 4
      \end{bmatrix}
    \end{align*}
    we have
    \begin{align*}
      I_A &= \{ (1, 1, \{1, 2, 3\}), (2, 2, \{1, 2, 4\}), (3, 3, \{1, 3, 4\}),
      (4, 4, \{1, 2, 3, 4\}), \\
      & \hspace{0.65cm} (5, 1, \{4\}), (6, 2, \{3\}), (7, 3, \{2\}) \}. \\
      \delta_A &= \{ (5, 1, \{4\}), (6, 2, \{3\}), (7, 3, \{2\}) \}. \\
      U_A &= \{ (4, 4, \{1, 2, 3, 4\}) \}. \\
      r_{A,\tau} &= \left\{
      \begin{array}{rl}
        \emptyset & \mbox{if $\ \tau \le 0$} \\
        \delta_A & \mbox{if $\ 0 < \tau < 3$} \\
        I_A \setminus U_A & \mbox{if $\ \tau = 3$} \\
        I_A & \mbox{if $\ \tau > 3$}
      \end{array}. \right.
    \end{align*}
    \qed
  \end{example}

  \begin{definition}[$\tau$-Infrequent and Minimal Itemsets]{\ \newline}
    \label{def:mri}
    An itemset $I \subseteq I_A$ is $\tau$-infrequent and minimal if:
    \begin{enumerate}
      \item $\tau$-Infrequency: $|R_I| \le \tau$;
      \item Minimality: $|R_S| > \tau$ $\forall S \subset I$, $S \ne
      \emptyset$.
    \end{enumerate}
    When $\tau = 1$ we refer to the $\tau$-infrequent and minimal itemsets as
    being the \emph{unique and minimal itemsets} and in this case we often drop
    any $\tau$ subscripts to streamline notation.
  \end{definition}

  Note that to establish minimality in Definition \ref{def:mri} it is only
  necessary to test that $|R_S| > \tau$ for sets $S \subset I$ of size
  $|I| - 1$ since $R_{S^\prime} \supseteq R_S \ \forall S^\prime \subset S$.
  These $|I| - 1$ subsets are referred to as the \emph{support itemsets} of
  $I$. Notice also that itemsets of size $1$ (items) are trivially minimal.

  We denote the set of all unique and minimal itemsets by $\mathcal{I}_A
  \subseteq 2^{I_A}$ and the set of all $\tau$-infrequent and minimal itemsets
  by $\mathcal{I}_{A,\tau} \subseteq 2^{I_A}$, where $2^{I_A}$ denotes the set
  of all subsets of $I_A$. We use calligraphic script to indicate that
  $\mathcal{I}_A$ is a set of sets (similarly for $\mathcal{I}_{A,\tau}$) and
  to distinguish it from the set of items $I_A$. Notice that $\mathcal{I}_{A,
  \tau} = \mathcal{I}_A$ when $\tau = 1$.

\section{Minimal Infrequent Itemset Mining} \label{sec:algo}

  In this section we introduce a new algorithm for efficiently finding all of
  the $\tau$-in\-frequent and minimal $k$-itemsets up to a user specified size
  $k_{max}$, $1 \le k \le k_{max} \le m$ and frequency threshold $\tau > 0$.

  \subsection{Pre-processing}

  We begin by observing that uniform items $u \in U_A$ can be deleted from
  $I_A$ as they cannot form a minimal $\tau$-infrequent itemset (if $u \in I$
  and $|R_I| \le \tau$ then $|R_S| = |R_I| \ngtr \tau$ for $S = I \setminus
  \{ u \}$). Further, the set of $\tau$-infrequent individual items $r_{A,
  \tau}$ can be readily identified by direct search. The remaining set of
  non-uniform and non-$\tau$-infrequent items $I_{A,\tau}^\prime = I_A
  \setminus U_A \setminus r_{A,\tau}$ can be partitioned into sets $L_{A,\tau}$
  and $\bar{L}_{A,\tau} = I_{A,\tau}^\prime \setminus L_{A,\tau}$ such that (i)
  $R_a \ne R_b$ $\forall a, b \in L_{A,\tau}$, (ii) $\forall c \in \bar{L}_{A,
  \tau}$ there exists $d \in L_{A,\tau}$ with $R_c = R_d$. That is, within set
  $L_{A,\tau}$ no items share the same set of rows. This partitioning can be
  achieved in the obvious way. Namely, for any set of items in $I_{A,
  \tau}^\prime$ which share the same set of rows, add one of these items to
  $L_{A,\tau}$ and the rest to $\bar{L}_{A,\tau}$. Revisiting Example
  \ref{ex1}, we have $L_{A,\tau} = \{ (1, 1, \{1, 2, 3\}), (2, 2, \{1, 2, 4\}),
  (3, 3, \{1, 3, 4\}) \}$ for $0 < \tau < 3$.

  The partitioning into $L_{A,\tau}$ and $I_{A,\tau}^\prime \setminus L_{A,
  \tau}$ possesses the following useful property:
  \begin{proposition} \label{prop:partition}
    Let $W \subseteq L_{A,\tau}$ be a minimal $\tau$-infrequent itemset. Let
    $w^\prime \in I_A \setminus L_{A,\tau}$ with $R_w = R_{w^\prime}$ for some
    $w \in W$. Then $W \setminus \{ w \} \cup \{ w^\prime \}$ is also a minimal
    $\tau$-infrequent itemset.
  \end{proposition}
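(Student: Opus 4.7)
The plan is to exploit the fact that within the itemset the row-set of $w$ is identical to that of $w'$, so replacing one by the other leaves all relevant intersections $R_{(\cdot)}$ unchanged. Once this observation is in place, both properties required by Definition \ref{def:mri} fall out more or less mechanically. Let me write $W' = (W \setminus \{w\}) \cup \{w'\}$ throughout.

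First I would verify $\tau$-infrequency. Since $R_w = R_{w'}$,
\begin{align*}
  R_{W'} \;=\; R_{w'} \,\cap\, \bigcap_{a \in W \setminus \{w\}} R_a
  \;=\; R_{w} \,\cap\, \bigcap_{a \in W \setminus \{w\}} R_a
  \;=\; R_W,
\end{align*}
so $|R_{W'}| = |R_W| \le \tau$ by the $\tau$-infrequency of $W$.

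Next I would establish minimality of $W'$ by taking an arbitrary nonempty proper subset $S' \subsetneq W'$ and showing $|R_{S'}| > \tau$. The natural case split is on whether $w'$ lies in $S'$. If $w' \notin S'$, then $S' \subseteq W \setminus \{w\} \subsetneq W$ and $S' \neq \emptyset$, so the minimality of $W$ directly gives $|R_{S'}| > \tau$. If $w' \in S'$, form the ``swapped'' subset $S = (S' \setminus \{w'\}) \cup \{w\}$; exactly as in the previous paragraph, $R_S = R_{S'}$ because $R_w = R_{w'}$. Moreover $S \subseteq W$, and $|S| = |S'| < |W'| = |W|$, so $S$ is a nonempty proper subset of $W$. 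Minimality of $W$ then yields $|R_{S}| > \tau$, hence $|R_{S'}| > \tau$.

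The only potential pitfall is the book-keeping in the second case, namely checking that $S$ is a \emph{proper} nonempty subset of $W$ (so that the minimality hypothesis actually applies), but this reduces to the cardinality comparison above. No use is made of $w' \in I_A \setminus L_{A,\tau}$; the argument relies solely on $R_w = R_{w'}$, which is what makes the swap invisible to every intersection $R_{(\cdot)}$ involved.
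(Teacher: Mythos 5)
Your proof is correct and follows essentially the same route as the paper's: since $R_w = R_{w'}$, the swap leaves $R_{W'}$ and the row-sets of all the relevant subsets unchanged, so both $\tau$-infrequency and minimality carry over. The only cosmetic difference is that you verify minimality over all nonempty proper subsets $S' \subsetneq W'$ while the paper restricts to the subsets of size $|W|-1$ (which it has already noted suffices); both arguments are sound.
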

  \begin{proof}
    Since $W$ is minimal and $\tau$-infrequent, $|R_W| \le \tau$ and $|R_S| >
    \tau$ for all subsets $S \subset W$ such that $|S| = |W| - 1$, $S \ne
    \emptyset$. Let $W^\prime = W \setminus \{ w \} \cup \{ w^\prime \}$. We
    have $R_{W^\prime} = R_{W \setminus \{ w \}} \cap R_{w^\prime} =
    R_{W \setminus \{ w \}} \cap R_w = R_W$ since $R_w = R_{w^\prime}$. Hence,
    $|R_{W^\prime}| = |R_W| \le \tau$. Now consider any subset $S^\prime
    \subset W^\prime$ such that $|S^\prime| = |W^\prime| - 1$. We have
    $|W^\prime| - 1 = |W| - 1$ and either (i) $S^\prime = S$ when $w \notin S$
    or (ii) $S^\prime = S \setminus \{ w \} \cup \{ w^\prime \}$ when $w \in
    S$, where $S \subset W$, $|S| = |W| - 1$. Thus, either (i) $R_{S^\prime} =
    R_S$ or (ii) $R_{S^\prime} = R_{S \setminus \{ w \}} \cap R_{w^\prime} =
    R_{S \setminus \{ w \}} \cap R_w = R_S$, respectively. That is,
    $|R_{S^\prime}| = |R_S| > \tau$ and we are done. \qed
  \end{proof}

  It follows that the importance of the partitioning into $L_{A,\tau}$ and
  $I_{A,\tau}^\prime \setminus L_{A,\tau}$ is that after finding the set of
  $\tau$-infrequent and minimal itemsets $\mathcal{L}_{A,\tau} \subset 2^{L_{A,
  \tau}}$ of $L_{A,\tau}$, the set of $\tau$-infrequent and minimal itemsets
  $\mathcal{I}_{A,\tau} \subset 2^{I_A}$ of $I_A$ can be obtained immediately.
  Namely,
  \begin{proposition} \label{prop:reconstruct}
    For any partition $(L_{A,\tau}, I_{A,\tau}^\prime \setminus L_{A,\tau})$
    the following holds: $\mathcal{I}_{A,\tau} = \mathcal{L}_{A,\tau} \cup
    \mathcal{\bar{L}}_{A,\tau} \cup r_{A,\tau}$, where $\mathcal{\bar{L}}_{A,
    \tau} = \{  I \setminus \{ a \} \cup \{ b \} : I \in \mathcal{L}_{A,\tau},
    a \in I, b \in \bar{L}_{A,\tau}, R_a = R_b \}$.
  \end{proposition}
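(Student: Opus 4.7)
The plan is to prove the asserted set equality by showing both inclusions, with Proposition \ref{prop:partition} doing most of the work and a case split on itemset cardinality handling the boundary cases.

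For the $(\supseteq)$ inclusion I would treat each of the three union components in turn. The inclusion $\mathcal{L}_{A,\tau} \subseteq \mathcal{I}_{A,\tau}$ is immediate since $L_{A,\tau} \subseteq I_A$, so any minimal $\tau$-infrequent itemset over $L_{A,\tau}$ is one over $I_A$. For $r_{A,\tau}$, identifying an item $a$ with the singleton $\{a\}$, any $a \in r_{A,\tau}$ satisfies $|R_{\{a\}}| = |R_a| \le \tau$ and is vacuously minimal (no nonempty proper subsets exist). The inclusion $\mathcal{\bar L}_{A,\tau} \subseteq \mathcal{I}_{A,\tau}$ is exactly the content of Proposition \ref{prop:partition}.

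For the harder $(\subseteq)$ inclusion I would take $I \in \mathcal{I}_{A,\tau}$ and split on $|I|$. If $|I| = 1$, say $I = \{a\}$, then $|R_a| \le \tau$ places $a$ in $r_{A,\tau}$. If $|I| \ge 2$, I first eliminate the two ``bad'' sources of items: no $u \in I$ can be uniform, since then $R_{I \setminus \{u\}} = R_I$ would violate minimality; and no $a \in I$ can be individually $\tau$-infrequent, since then the singleton $\{a\} \subsetneq I$ is $\tau$-infrequent, again contradicting minimality. Hence $I \subseteq I_{A,\tau}^\prime$, and I decompose $I = I_L \cup I_{\bar L}$ using the partition $(L_{A,\tau}, \bar L_{A,\tau})$. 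If $I_{\bar L} = \emptyset$ then $I \in \mathcal{L}_{A,\tau}$; otherwise I replace each $b \in I_{\bar L}$ by the item $a_b \in L_{A,\tau}$ with $R_{a_b} = R_b$ (which exists by the defining property of $\bar L_{A,\tau}$ and is unique because row-sets within $L_{A,\tau}$ are pairwise distinct), producing a candidate $I^\star \subseteq L_{A,\tau}$.

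To close the argument I would verify $|I^\star| = |I|$ by a non-collision check: if $a_b = a_{b'}$ for distinct $b, b' \in I_{\bar L}$, or $a_b = a$ for some $a \in I_L$, then two items in $I$ would share a row-set, so $R_{I \setminus \{b\}} = R_I$, contradicting minimality of $I$. Then, iterating Proposition \ref{prop:partition} one swap at a time along the chain from $I^\star$ to $I$, each intermediate itemset is minimal and $\tau$-infrequent, so in particular $I^\star \in \mathcal{L}_{A,\tau}$ and $I$ is reached from $I^\star$ by swaps of the required form, putting $I \in \mathcal{\bar L}_{A,\tau}$. The main obstacle is precisely this last step: when $|I_{\bar L}| > 1$, a single application of the swap rule defining $\mathcal{\bar L}_{A,\tau}$ does not suffice, and one must read the definition as closed under iterated single swaps (equivalently, apply Proposition \ref{prop:partition} inductively on $|I_{\bar L}|$). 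The non-collision argument is what makes this iteration valid, and is the only non-routine bookkeeping in the proof.
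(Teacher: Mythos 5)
Your proof is correct and follows essentially the same route as the paper's: the easy inclusion is handled component-by-component, with singletons from $r_{A,\tau}$ being vacuously minimal and Proposition \ref{prop:partition} covering $\mathcal{\bar{L}}_{A,\tau}$, while the hard inclusion first excludes uniform items, then individually $\tau$-infrequent items, and finally splits on whether $I$ meets $\bar{L}_{A,\tau}$. The one place you go beyond the paper is the last step: the paper's proof simply asserts that $I \in \mathcal{\hat{I}}_{A,\tau}$ with $I \cap \bar{L}_{A,\tau} \ne \emptyset$ implies $I \in \mathcal{\bar{L}}_{A,\tau}$, whereas you correctly observe that the displayed definition of $\mathcal{\bar{L}}_{A,\tau}$ permits only a single swap, so when $|I \cap \bar{L}_{A,\tau}| > 1$ one must either read the definition as closed under iterated swaps or apply Proposition \ref{prop:partition} inductively; your non-collision check (two items of a minimal itemset with $|I| \ge 2$ cannot share a row-set, since deleting one would leave $R_I$ unchanged and violate minimality) is exactly what makes the map $b \mapsto a_b$ injective, keeps $|I^\star| = |I|$, and lets the chain of swaps go through. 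That is a real looseness in the paper's own write-up which your version closes.
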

  \begin{proof}
    The proposition states that itemset $I \in \mathcal{I}_{A,\tau} \iff I \in
    \mathcal{L}_{A,\tau} \cup \mathcal{\bar{L}}_{A,\tau} \cup r_{A,\tau}$.
    ``$\Leftarrow$'' If itemset $I \in \mathcal{L}_{A,\tau}$ or $I \in r_{A,
    \tau}$ then $I$ is minimal and $\tau$-infrequent and so $I \in
    \mathcal{I}_{A,\tau}$; if $I \in \mathcal{\bar{L}}_{A,\tau}$ then, by
    Proposition \ref{prop:partition}, $I$ is minimal and $\tau$-infrequent and
    so $I \in \mathcal{I}_{A,\tau}$.
    \\
    ``$\Rightarrow$'' Suppose $I \in \mathcal{I}_{A,\tau}$. First of all
    observe that $\mathcal{\tilde{I}}_{A,\tau} = \mathcal{I}_{A,\tau}$, where
    $\tilde{I}_A = I_A \setminus U_A$ and $\mathcal{\tilde{I}}_{A,\tau}$ is the
    set of minimal and $\tau$-infrequent itemsets in $2^{\tilde{I}_A}$. This
    holds because $I \cap U_A = \emptyset$ for any $I \in \mathcal{I}_{A,\tau}$
    (suppose $u \in I$, $u \in U_A$ and $I$ is minimal and $\tau$-infrequent,
    then $R_I = R_{I \setminus \{ u \}} \cap R_u = R_{I \setminus \{ u \}}$
    since $R_u$ contains all rows of $A$; thus $|R_{I \setminus \{ u \}}| =
    |R_I| \le \tau$ which contradicts the minimality of $I$).
    Further, we have $\mathcal{\tilde{I}}_{A,\tau} = \mathcal{\hat{I}}_{A,\tau}
    \cup r_{A,\tau}$ where $\hat{I}_{A,\tau} = I_A \setminus U_A \setminus
    r_{A,\tau}$ and $\mathcal{\hat{I}}_{A,\tau}$ is the set of minimal and
    $\tau$-infrequent itemsets in $2^{\hat{I}_{A,\tau}}$. This is because the
    elements of $r_{A,\tau}$ are minimal and $\tau$-infrequent individual items
    and so if $I \in \mathcal{\tilde{I}}_{A,\tau}$ then either (i) $I \cap
    r_{A,\tau} = \emptyset$ or (ii) $|I| = 1$, $I \in r_{A,\tau}$ (if $|I \cap
    r_{A,\tau}| > 1$ then $|I| > 1$ and $|R_a| \le \tau$ $\forall a \in I \cap
    r_{A,\tau}$ and so $I$ is not minimal; if $|I \cap r_{A,\tau}| = 1$ and
    $|I| > 1$ then $I$ is not minimal).
    Hence, we have that $\mathcal{I}_{A,\tau} = \mathcal{\hat{I}}_{A,\tau} \cup
    r_{A,\tau}$. Now $\hat{I}_{A,\tau} = L_{A,\tau} \cup \bar{L}_{A,\tau}$ with
    $L_{A,\tau} \cap \bar{L}_{A,\tau} = \emptyset$. Hence, if $I \in
    \mathcal{\hat{I}}_{A,\tau}$ and $I \cap \bar{L}_{A,\tau} = \emptyset$ (so
    $I \subseteq L_{A,\tau}$) then $I \in \mathcal{L}_{A,\tau}$. If $I \in
    \mathcal{\hat{I}}_{A,\tau}$ and $I \cap \bar{L}_{A,\tau} \ne \emptyset$
    then $I \in \mathcal{\bar{L}}_{A,\tau}$ and we are done. Notice that this
    proof works for any partition $(L_{A,\tau}, I_{A,\tau}^\prime \setminus
    L_{A,\tau})$. \qed
  \end{proof}

  In light of Proposition \ref{prop:reconstruct}, our goal can therefore be
  simplified to finding all $\tau$-infrequent and minimal $k$-itemsets of
  $L_{A,\tau}$, $1 \le k \le k_{max}$.

  \begin{example} \label{ex2}
    For $\tau = 1$ and the dataset
    \begin{align*}
      A &= 
      \begin{bmatrix}
        1 & 2 & 3 & 4 & 8 \\
        1 & 2 & 7 & 4 & 8 \\
        1 & 6 & 3 & 4 & 8 \\
        5 & 2 & 3 & 4 & 9
      \end{bmatrix}
    \end{align*}
    we have
    \begin{align*}
      I_A &= \{ (1, 1, \{1, 2, 3\}), (2, 2, \{1, 2, 4\}), (3, 3, \{1, 3, 4\}),
      (4, 4, \{1, 2, 3, 4\}), \\
      & \hspace{0.65cm} (5, 1, \{4\}), (6, 2, \{3\}), (7, 3, \{2\}), (8, 5,
      \{1, 2, 3\}), (9, 5, \{4\}) \}. \\
      \delta_A &= \{ (5, 1, \{4\}), (6, 2, \{3\}), (7, 3, \{2\}), (9, 5, \{4\})
      \}. \\
      U_A &= \{ (4, 4, \{1, 2, 3, 4\}) \}. \\
      r_{A,\tau} &= \delta_A.
    \end{align*}
    The remaining set of non-uniform and non-unique items is
    $$
      I_{A,\tau}^\prime = I_A \setminus U_A \setminus r_{A,\tau} = \{ (1, 1,
      \{1, 2, 3\}), (2, 2, \{1, 2, 4\}), (3, 3, \{1, 3, 4\}), (8, 5, \{1, 2,
      3\}) \}.
    $$
    The set $I_{A,\tau}^\prime$ can be partitioned into sets $L_{A,\tau} = \{
    (1, 1, \{1, 2, 3\}), (2, 2, \{1, 2, 4\}), (3, 3,$ $\{1, 3, 4\}) \}$ and
    $\bar{L}_{A,\tau} = I_{A,\tau}^\prime \setminus L_{A,\tau} = \{ (8, 5,
    \{1, 2, 3\}) \}$ such that (i) $R_a \ne R_b$ $\forall a, b \in L_{A,\tau}$
    ($\{1, 2, 3\} \ne \{1, 2, 4\} \ne \{1, 3, 4\}$ and $\{1, 2, 3\} \ne \{1, 3,
    4\}$), (ii) $\forall c \in \bar{L}_{A,\tau}$ there exists $d \in L_{A,
    \tau}$ with $R_c = R_d$ (for $(8, 5, \{1, 2, 3\})$ there is $(1, 1, \{1, 2,
    3\})$ in $L_{A,\tau}$).

    Let $a = (1, 1, \{1, 2, 3\})$, $b = (2, 2, \{1, 2, 4\})$, $c = (3, 3, \{1,
    3, 4\})$ and $d = (8, 5,$ $\{1, 2, 3\})$. Proposition \ref{prop:partition}
    says that if $\{ a, b, c \} \subseteq L_{A,\tau}$ is a minimal
    $\tau$-infrequent itemset (which it is when $\tau = 1$) and $d \in I_A
    \setminus L_{A,\tau}$ with $R_d = R_a$ then $\{ d, b, c \}$ is also a
    minimal $\tau$-infrequent itemset. Proposition \ref{prop:reconstruct} says
    that for our chosen partition $(L_{A,\tau}, I_{A,\tau}^\prime \setminus
    L_{A,\tau})$ the set of all minimal $\tau$-infrequent itemsets
    $\mathcal{I}_{A,\tau}$ can be obtained from the sets $\mathcal{L}_{A,
    \tau}$, $\mathcal{\bar{L}}_{A,\tau}$ and $r_{A,\tau}$. \qed
  \end{example}

  \subsection{Pruning the Search Space} \label{sec:pruning}

  \begin{figure}
    \centering
    \includegraphics[width=0.55\columnwidth,trim=0cm 1.5cm 0cm 0cm]
    {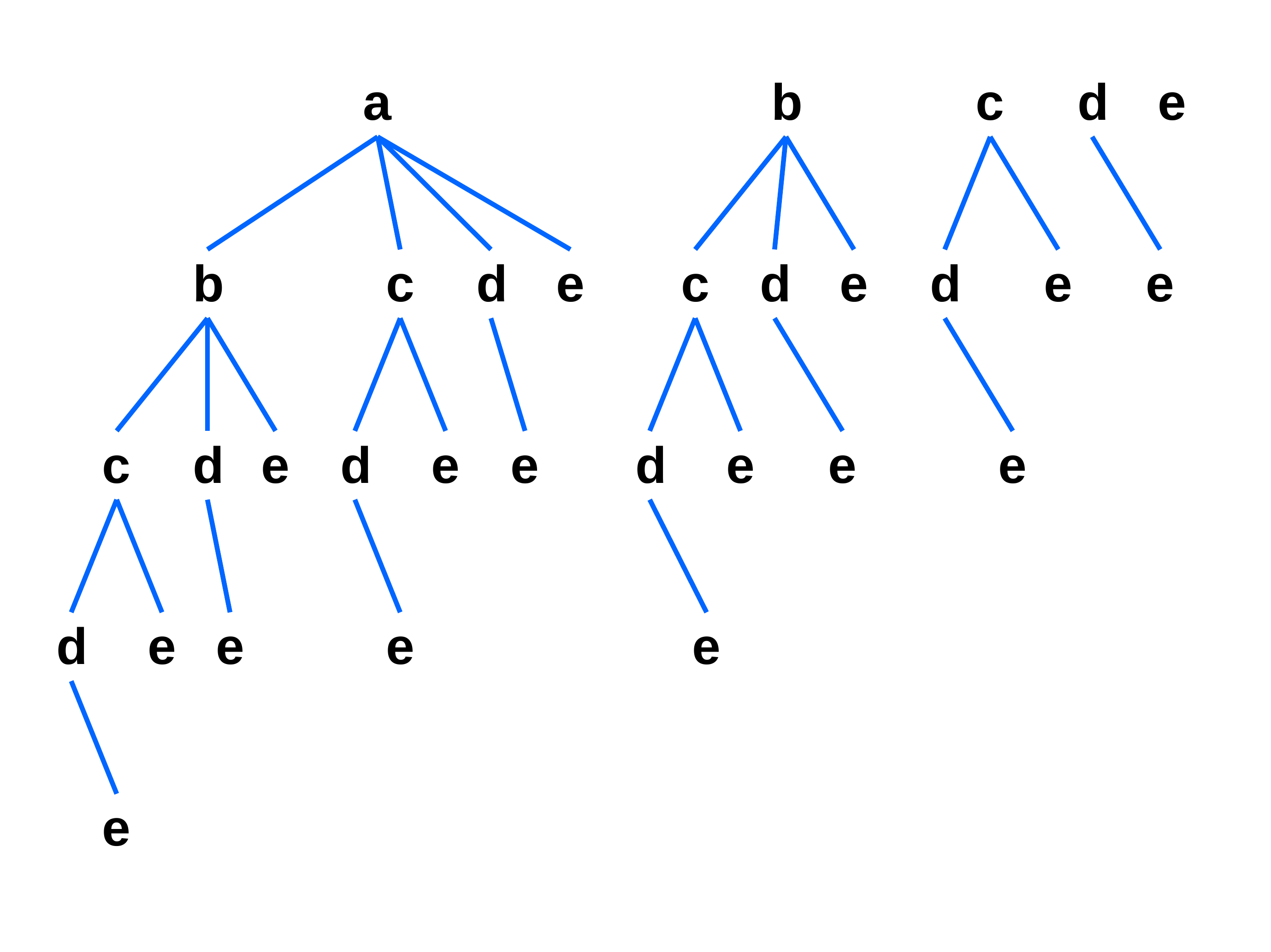}
    \caption{Prefix tree for the alphabet $L_{A,\tau} = \{ a, b, c, d, e \}$.
    By starting at the root and traversing the branches of the tree, every
    possible ordered sequence of letters can be obtained \emph{e.g.} traversing
    the far left-hand branch yields the sequence $abcde$.}
    \label{fig1}
  \end{figure}

  Considering the items in $L_{A,\tau}$ to be an alphabet, all of the possible
  words in the form of ordered sequences that can be built from $L_{A,\tau}$
  can be represented by a prefix tree. For example, when $L_{A,\tau} = \{ a, b,
  c, d, e \}$, the associated prefix tree is shown in the Figure \ref{fig1}. By
  starting at the root and traversing the branches of the tree, every possible
  ordered sequence of letters can be obtained.

  In principle, the $\tau$-infrequent and minimal $k$-itemsets of $L_{A,\tau}$
  can be found by traversing every branch of the tree to depth $k_{max}$ and
  testing each sequence of items obtained for $\tau$-infrequency and
  minimality. However, efficiency can be increased if it is possible to avoid
  fully traversing every branch i.e. the tree can be pruned.
  Basic pruning can be achieved using following fundamental property of
  itemsets:
  \begin{proposition}[Monotonicity] \label{prop:mono}
    Let $I$ be an itemset. If $I$ is not minimal then no superset of $I$ can be
    minimal.
  \end{proposition}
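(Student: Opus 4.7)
The plan is a one-step contrapositive argument that leverages only the minimality clause of Definition \ref{def:mri} together with the obvious anti-monotonicity of the row set: if $S \subseteq T$ then $R_T = \bigcap_{a \in T} R_a \subseteq \bigcap_{a \in S} R_a = R_S$, so $|R_T| \le |R_S|$.

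First I would unpack the hypothesis. Saying that $I$ is not minimal means, by Definition \ref{def:mri}, that the minimality clause fails for $I$: there exists a non-empty proper subset $S \subsetneq I$ with $|R_S| \le \tau$. This $S$ is the witness I will carry forward.

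Next I would fix an arbitrary superset $J \supseteq I$ and check that the same $S$ witnesses the failure of minimality for $J$. Since $S \subsetneq I \subseteq J$, we have $S \subsetneq J$ and $S$ is non-empty, so $S$ is a legitimate non-empty proper subset of $J$. Because the value $|R_S|$ does not depend on $J$ at all, it still satisfies $|R_S| \le \tau$, and the minimality clause of Definition \ref{def:mri} fails for $J$. Hence $J$ is not minimal, and since $J$ was arbitrary, no superset of $I$ is minimal.

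There is essentially no hard step here; the entire content of the proposition is that the witness of non-minimality for $I$ remains a witness for every superset. The only point that deserves a sentence of care is to note that the clause being used is the \emph{minimality} clause alone and not the combined $\tau$-infrequency plus minimality condition, since otherwise the implication would be vacuously interpreted. Once that reading is fixed, the argument is a single line and mirrors the standard Apriori-style monotonicity used for search-tree pruning in Section~\ref{sec:pruning}.
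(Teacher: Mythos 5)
Your proof is correct and takes essentially the same route as the paper: both arguments observe that the non-empty proper subset $S \subsetneq I$ with $|R_S| \le \tau$ witnessing non-minimality of $I$ is also a non-empty proper subset of any superset $J \supseteq I$, so it witnesses non-minimality of $J$ as well. (The preliminary remark about anti-monotonicity of $R$ is not actually needed, as you yourself note.)
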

  \begin{proof}
    Since $I$ is non-minimal there exists $S \subset I$, $S \ne \emptyset$ such
    that $|R_S| \le \tau$. It follows that $\forall J \supset I$ there exists
    $S \subset J$, $S \ne \emptyset$ such that $|R_S| \le \tau$ and so $J$ is
    also non-minimal. \qed
  \end{proof}

  Hence, as soon as we determine that the sequence of items in an itemset is
  non-minimal, we can terminate traversal of that branch of the tree. Note that
  similar pruning is not possible based on $\tau$-infrequency since a superset
  of an itemset $I$ can be $\tau$-infrequent even if $I$ is not
  $\tau$-infrequent due to the decrease in frequency as more and more items are
  added to an itemset.

  Importantly, the prefix tree associated with itemset $L_{A,\tau}$ is not
  unique since the tree depends on how we choose to order the items in $L_{A,
  \tau}$. In general, it is challenging to determine an ordering of items in
  $L_{A,\tau}$ which minimises the number of vertices which need to be
  traversed in the prefix tree in order to find the set $\mathcal{L}_{A,\tau}$
  of $\tau$-infrequent and minimal itemsets of $L_{A,\tau}$. We revisit this
  question later, in Section \ref{sec:pruningperf}, but note here that sorting
  the items of $L_{A,\tau}$ into ascending order using the following item
  ordering is efficient for a wide range of datasets.

  \begin{definition}[Ascending Order] \label{def:order}
    We order items $a < b$ if (i) $|R_a| < |R_b|$ or (ii) $|R_a| = |R_b|$ and
    $j_a < j_b$ or (iii) $|R_a| = |R_b|$, $j_a = j_b$ and $\min R_a < \min
    R_b$.
  \end{definition}

  Note that due to the pre-processing and partitioning used to obtain $L_{A,
  \tau}$, for any items $a \in L_{A,\tau}$, $b\in L_{A,\tau} \setminus \{ a \}$
  we must have either $a < b$ or $b < a$ \emph{i.e.} strict total order (if
  $j_a = j_b$, $\min R_a = \min R_b$ then items $a$ and $b$ are both in the
  same column $j_a$ and row $\min R_a$ of the dataset and so we must have $a =
  b$, but this contradicts the fact that $b \in L_{A,\tau} \setminus \{ a \}$).
  We let $L_{A,\tau}^<$ denote a list of the items in $L_{A,\tau}$ sorted in
  ascending order. Note that $L_{A,\tau}^<$ is simply a permutation of $L_{A,
  \tau}$.

  \subsection{Potential Performance Bottlenecks}

  To evaluate whether an itemset $I$ is minimal or not we use the support
  itemset test to verify Definition \ref{def:mri}(2). To evaluate whether an
  itemset $I$ is $\tau$-infrequent, we intersect the rows of the elements in
  $I$ to obtain $R_I = \cap_{a \in I} R_a$ and test whether $|R_I| \le \tau$ to
  verify Definition \ref{def:mri}(1). Both of these tests are potentially
  expensive.

  The support itemset test requires enumerating the subsets $S\subset I$,
  $|S| = |I| - 1$, and calculating $R_S = \cap_{a \in S} R_a$ for each subset.
  As already noted, testing for $\tau$-infrequency requires calculating $R_I =
  \cap_{a \in I} R_a$. For large tables, the row sets $R_a$ may be large and so
  time consuming to obtain, \emph{e.g.} if the approach taken is to scan the
  dataset for item $a$ and record the rows in which $a$ appears, plus
  additionally the complexity of calculating $R_I$ in the obvious manner scales
  as $O(|I|\min_{a\in I}|R_a|)$.

  \subsection{Kyiv Algorithm}

  The Kyiv algorithm performs a breadth first search of the prefix tree defined
  by ordered list $L_{A,\tau}^<$. Branches are pruned using Proposition
  \ref{prop:mono} -- if an itemset $I$ fails the support itemset test in
  Definition \ref{def:mri}(2) then it must be non-minimal and so the subtree
  with itemset $I$ at the root can be pruned. The key advantage of the
  breadth-first approach is that the support row test can be performed
  extremely efficiently, as discussed in more detail in Section
  \ref{sec:supitest}. Pseudo-code for the Kyiv algorithm is given in Algorithm
  \ref{al:kyiv}.

  \begin{algorithm}\caption{Kyiv} \label{al:kyiv}
  \begin{algorithmic}[1]
    \State \textbf{Input}: dataset $A$, $\tau$, threshold $k_{max}$
    \State \textbf{Output}: all minimal $\tau$-infrequent $k$-itemsets, $k \le
    k_{max}$
    \State compute $I_A = I_{A,\tau}^\prime \cup U_A \cup r_{A,\tau}$
    \State compute $L_{A,\tau}$ for chosen partition $(L_{A,\tau},
    I_{A,\tau}^\prime \setminus L_{A,\tau})$
    \State print $\tau$-infrequent items in $r_{A,\tau}$ \Comment{$k = 1$ case}
    \State sort $L_{A,\tau}$ to obtain $L_{A,\tau}^<$
    \State $t \leftarrow 0$, $k \gets 2$
    \ForEach {$a \in L_{A,\tau}^<$} $t \leftarrow t + 1$, $P_t \leftarrow
    \{ a \}$
    \label{alg:prep}
    \EndFor
    \While {$k \le k_{max}$} \label{alg:while}
      \State $t^\prime \gets 0$
      \ForEach {$i \in \{ 1, \dots, t - 1 \}$}
        \State $I \gets P_i$
        \ForEach {$j \in \{ i + 1, \dots, t \}$} \label{alg:thread1}
          \State $J \gets P_j$
          \LineComment {get the highest order items in $I$ and $J$}
          \State $a \leftarrow \max(I), \ b \leftarrow \max(J)$
          \If {$I \setminus \{ a \} \ne J \setminus \{ b \}$}
            \State break \Comment {itemsets do not share a common prefix}
          \EndIf
          \LineComment {itemsets $I$ and $J$ differ exactly by one item now}
          \State $W \gets I \cup J$
          \If {$k > 2$} \label{alg:kgr2condition}
            \LineComment {support itemset test, Definition \ref{def:mri}(2)}
            \If {$\exists \, S \subset W, |S| = |W| - 1: |R_S| \le \tau$}
            \label{alg:support}
              \State continue
              \Comment {non-minimal, prune this branch}
            \EndIf
            \If {$k = k_{max}$}
              \LineComment {Lemma \ref{lem} and Corollary \ref{cor}}
              \If {$|R_I| + |R_J| > |R_{I \setminus \{ a \}}| + \tau$}
              continue
              \label{alg:lem}
              \EndIf
              \State $c \gets \max (J \setminus \{ b \})$
              \NoThen
              \If {$\min(|R_{I \setminus \{ c \}}| - |R_I|, |R_{J \setminus \{
              c \}}| - |R_J|) + \tau < |R_{I \setminus \{ c \}} \cap R_b|$}
              \label{alg:cor} \textbf{then}
                \State continue
              \EndIf
              \ReThen
            \EndIf
          \EndIf
          \State $R_W \gets R_I \cap R_J$ \Comment{intersect rows}
          \label{alg:intersect}
          \If {$|R_W| = 0$ or $|R_W| = \min(|R_I|, |R_J|)$}
            \State continue
            \Comment {skip absent and uniform itemsets}
          \EndIf
          \If {$|R_W| \le \tau$}
            \State print $W$
            \Comment {minimal $\tau$-infrequent itemset found}
            \ForEach {$w \in W$}
            \Comment {apply Proposition \ref{prop:partition}}
              \If {$\exists \, w' \in I_{A,\tau}^\prime \setminus L_{A,\tau}:
              R_w = R_{w'}$}
                \State print $W \setminus \{ w \} \cup \{ w' \}$
              \EndIf
            \EndFor
          \Else \Comment {need to store non-$\tau$-infrequent minimal itemset}
            \If {$k < k_{max}$}
               \State $t^\prime \gets t^\prime + 1$, $P_{t^\prime}^\prime \gets
               W$
            \EndIf
          \EndIf
        \EndFor
      \EndFor
      \ForEach {$t \in \{ 1, \dots, t^\prime \}$} \label{alg:copystructure}
        $P_t \gets P_t^\prime$
      \EndFor
      \State $k \gets k + 1$, $t \gets t^\prime$
    \EndWhile
  \end{algorithmic}
  \end{algorithm}

  In Algorithm \ref{al:kyiv} the collection of sets $\{ P_i \}_{i = 1}^t$ holds
  the vertices of level $k - 1$ of the pruned prefix graph, and the vertices of
  level $k$ are stored in $\{ P_i^\prime \}_{i = 1}^{ t^\prime }$. Note that
  there is never any need to store more than two levels of the pruned prefix
  tree -- we discuss these memory requirements in more detail below. The
  algorithm visits each vertex in level $k$ and takes one of three actions: (i)
  finds that the vertex is a non-minimal itemset and so prunes it (it is not
  added to $P^\prime$ and its children are not traversed), (ii) finds that the
  vertex is a minimal $\tau$-infrequent itemset and so prints it (it is not
  added to $P^\prime$ and its children are not traversed), (iii) finds that the
  vertex is not $\tau$-infrequent and its children must be traversed.

  In our implementation of Algorithm \ref{al:kyiv}, we use a recursive data
  structure called Graph to hold the prefix tree levels. Graph stores an array
  of references to its children of type Graph and other useful data such as the
  rows associated with the current node. Each child is an item $(v, j_a, R_a)$
  and is identified by index value $m v + j_a$. Fast access to the children is
  achieved by use of a hash table, which is also stored among the properties of
  the Graph class.

  \subsubsection{Highly Efficient Support Itemset Testing} \label{sec:supitest}

  One of the key benefits of adopting a breadth-first approach in Algorithm
  \ref{al:kyiv} is that the computational cost of the support itemset test at
  line \ref{alg:support} can be reduced to essentially zero. This is because
  the itemsets $S \subset W$ of size $|S| = |W| - 1$, together with the
  associated row sets $R_S$, have already been pre-calculated and stored in
  data structure $\mathcal{P} := \{ P_i \}_{i = 1}^t$. Hence, evaluating
  whether there exists an $S$ such that $|R_S| \le \tau$ simply involves
  lookups from $\mathcal{P}$, which can be carried out efficiently using an
  appropriate data structure for $\mathcal{P}$.

  Observe that acceleration of the support itemset test at line
  \ref{alg:support} is achieved in Algorithm \ref{al:kyiv} at the cost of
  increased memory usage to store data structure $\mathcal{P}$. As $\tau$
  increases, the number of prefix tree vertices decreases and the arrays stored
  at each vertex occupy less memory. Nevertheless, this memory cost remains
  potentially significant, particularly when $\tau$ is small and in the middle
  of the prefix tree where the number of vertices in each level of the tree is
  largest. However, in view of the fact that the amount of RAM available is
  growing at a much faster rate than CPU clock speed, this trade-off between of
  increased memory consumption for a much reduced computational burden can be a
  favourable one.

  \subsubsection{Reducing Number of Row Intersections}
  \label{sec:lemcor}

  The remaining computational bottleneck of Algorithm \ref{al:kyiv} is at line
  \ref{alg:intersect}. We present performance measurements in Section
  \ref{sec:exp} that confirm line \ref{alg:intersect} accounts for the vast
  majority of the execution time of Algorithm \ref{al:kyiv}. However, we leave
  as future work the development of more efficient techniques for computing the
  intersection operation at line \ref{alg:intersect}.

  The potential exists to reduce the number of row intersections at the
  $k_{max}$ level of the prefix tree using the following properties:

  \begin{lemma} \label{lem}
    Let $I \subseteq I_A$ be an itemset and $a, b \in I_A$ any items in $I_A$.
    If
    \begin{align}
      |R_I \cap R_a| + |R_I \cap R_b| > |R_I| + \tau
    \end{align}
    then $I \cup \{ a, b \}$ is not a $\tau$-infrequent itemset.
  \end{lemma}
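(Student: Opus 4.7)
The plan is to unfold the definitions, apply inclusion--exclusion inside the ambient set $R_I$, and conclude by a single monotonicity observation.

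First I would rewrite the quantity of interest. By the definition of frequency, $R_{I \cup \{a,b\}} = R_I \cap R_a \cap R_b$, so the goal is to show $|R_I \cap R_a \cap R_b| > \tau$ whenever the hypothesis holds. Working inside $R_I$ and applying inclusion--exclusion to the two subsets $R_I \cap R_a$ and $R_I \cap R_b$ gives
\begin{align*}
  |R_I \cap R_a| + |R_I \cap R_b|
  &= |(R_I \cap R_a) \cup (R_I \cap R_b)| + |R_I \cap R_a \cap R_b|.
\end{align*}

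Next I would bound the union term. Since $(R_I \cap R_a) \cup (R_I \cap R_b) \subseteq R_I$, its cardinality is at most $|R_I|$. Rearranging the identity above and applying this bound yields
\begin{align*}
  |R_I \cap R_a \cap R_b|
  &\ge |R_I \cap R_a| + |R_I \cap R_b| - |R_I|.
\end{align*}

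Finally I would invoke the hypothesis $|R_I \cap R_a| + |R_I \cap R_b| > |R_I| + \tau$ to conclude $|R_{I \cup \{a,b\}}| = |R_I \cap R_a \cap R_b| > \tau$, which says exactly that $I \cup \{a,b\}$ is not $\tau$-infrequent. There is no real obstacle here; the only subtlety is recognising that one wants inclusion--exclusion relative to $R_I$ (so that the $|R_I|$ upper bound on the union can be used) rather than relative to the whole row index set. Once that relative viewpoint is adopted the proof is a two-line calculation.
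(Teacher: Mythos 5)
Your proof is correct and is essentially the same argument as the paper's: both rest on inclusion--exclusion applied to $R_I \cap R_a$ and $R_I \cap R_b$ together with the bound $|(R_I \cap R_a) \cup (R_I \cap R_b)| \le |R_I|$. The only difference is presentational --- the paper argues by contradiction while you rearrange the same identity into a direct lower bound on $|R_I \cap R_a \cap R_b|$.
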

  \begin{proof}
    We proceed by contradiction. Suppose $|R_I \cap R_a| + |R_I \cap R_b| >
    |R_I| + \tau$ and itemset $I \cup \{ a, b \}$ is $\tau$-infrequent (so
    $|R_I \cap R_a \cap R_b| \le \tau$). By the distributivity of set
    intersection, $R_I \cap (R_a \cup R_b) = (R_I \cap R_a) \cup (R_I \cap
    R_b)$. Hence,
    \begin{align*}
      & |R_I \cap (R_a \cup R_b)| \\
      & \quad = |(R_I \cap R_a) \cup (R_I \cap R_b)| \\
      & \quad = |R_I \cap R_a| + |R_I \cap R_b| - |(R_I \cap R_a) \cap (R_I
      \cap R_b)| \\
      & \quad = |R_I \cap R_a| + |R_I \cap R_b| - |R_I \cap R_a \cap R_b|.
    \end{align*}
    Now $|R_I| \ge |R_I \cap (R_a \cup R_b)|$ and by assumption $|R_I \cap R_a
    \cap R_b| \le \tau$. Hence, $|R_I| \ge |R_I \cap R_a| + |R_I \cap R_b|
    - \tau$, yielding the desired contradiction. \qed
  \end{proof}

  \begin{corollary} \label{cor}
    Let $a_1, \dots, a_k \in I_A$ be any items from $I_A$, with $k > 2$. If
    \begin{align}
      & \Gamma_0 > \min \{ \Gamma_1, \Gamma_2 \} + \tau
    \end{align}
    then $\{ a_1, \dots, a_k \}$ is not a $\tau$-infrequent itemset, where
    \begin{align*}
      \Gamma_0 & := |\cap_{i = 1}^{k-3} R_{a_i} \cap R_{a_{k-1}} \cap R_{a_k}|,
      \\
      \Gamma_1 & := |\cap_{i = 1}^{k-3} R_{a_i} \cap R_{a_{k-1}}| -
      |\cap_{i = 1}^{k-3} R_{a_i} \cap R_{a_{k-2}} \cap R_{a_{k-1}}|, \\
      \Gamma_2 & := |\cap_{i = 1}^{k-3} R_{a_i} \cap R_{a_k}| -
      |\cap_{i = 1}^{k-3} R_{a_i} \cap R_{a_{k-2}} \cap R_{a_k}|.
    \end{align*}
  \end{corollary}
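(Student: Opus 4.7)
The plan is to derive Corollary \ref{cor} as an application of Lemma \ref{lem} with a judiciously chosen base itemset $I$ and items $a,b$. The shape of the $\Gamma_i$ expressions already hints at the right choice: let $T = \bigcap_{i=1}^{k-3} R_{a_i}$, so that $\Gamma_0 = |T \cap R_{a_{k-1}} \cap R_{a_k}|$, $\Gamma_1 = |T \cap R_{a_{k-1}}| - |T \cap R_{a_{k-2}} \cap R_{a_{k-1}}|$, and $\Gamma_2 = |T \cap R_{a_k}| - |T \cap R_{a_{k-2}} \cap R_{a_k}|$. The hypothesis $\Gamma_0 > \min\{\Gamma_1,\Gamma_2\}+\tau$ gives us at least one of the two inequalities $\Gamma_0 > \Gamma_1 + \tau$ or $\Gamma_0 > \Gamma_2 + \tau$; by the evident symmetry between $a_{k-1}$ and $a_k$ in the statement, it suffices to handle the first case.

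Assume $\Gamma_0 > \Gamma_1 + \tau$. I would apply Lemma \ref{lem} with the itemset
\[
I = \{a_1,\dots,a_{k-3},a_{k-1}\}, \qquad a = a_{k-2}, \qquad b = a_k,
\]
so that $I \cup \{a,b\} = \{a_1,\dots,a_k\}$. Then $R_I = T \cap R_{a_{k-1}}$, $R_I \cap R_a = T \cap R_{a_{k-2}} \cap R_{a_{k-1}}$, and $R_I \cap R_b = T \cap R_{a_{k-1}} \cap R_{a_k} = \Gamma_0$. Rewriting $\Gamma_1$ as $\Gamma_1 = |R_I| - |R_I \cap R_a|$, the assumed inequality $\Gamma_0 > \Gamma_1 + \tau$ becomes
\[
|R_I \cap R_b| > |R_I| - |R_I \cap R_a| + \tau,
\]
which is exactly the hypothesis $|R_I \cap R_a| + |R_I \cap R_b| > |R_I| + \tau$ of Lemma \ref{lem}. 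The lemma then yields that $\{a_1,\dots,a_k\}$ is not $\tau$-infrequent, as required.

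For the symmetric case $\Gamma_0 > \Gamma_2 + \tau$, the same argument works with $a_{k-1}$ and $a_k$ interchanged, i.e.\ taking $I = \{a_1,\dots,a_{k-3},a_k\}$, $a = a_{k-2}$, $b = a_{k-1}$. I do not anticipate any real obstacle here: once the correspondence between the $\Gamma_i$ and the quantities in Lemma \ref{lem} is spotted, the proof is a one-line rearrangement. The only slightly delicate point is verifying that $\{a_1,\dots,a_{k-3},a_{k-1}\}$ genuinely qualifies as an itemset in Lemma \ref{lem} (in particular, that its row set equals $T \cap R_{a_{k-1}}$ as written), which is immediate from the definition $R_I = \bigcap_{a \in I} R_a$.
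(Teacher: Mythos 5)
Your proof is correct and follows essentially the same route as the paper's: both cases are reduced to Lemma \ref{lem} with exactly the same choices $I = \{a_1,\dots,a_{k-3},a_{k-1}\}$, $a = a_{k-2}$, $b = a_k$ (and the symmetric choice for the second case), the only difference being that you spell out the rearrangement $\Gamma_0 > \Gamma_1 + \tau \iff |R_I \cap R_a| + |R_I \cap R_b| > |R_I| + \tau$ slightly more explicitly than the paper does. No gaps.
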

  \begin{proof}
    There are two cases to consider.
    \newline
    Case (i): $\Gamma_0 > \min \{ \Gamma_1, \Gamma_2 \} + \tau = \Gamma_1 +
    \tau$. Then,
    \begin{align*}
      & |\cap_{i = 1}^{k-3} R_{a_i} \cap R_{a_{k-1}} \cap R_{a_{k-2}}| +
      |\cap_{i = 1}^{k-3} R_{a_i} \cap R_{a_{k-1}} \cap R_{a_k}| \\
      & \qquad > |\cap_{i = 1}^{k-3} R_{a_i} \cap R_{a_{k-1}}| + \tau.
    \end{align*}
    Let $I = \cup_{i = 1}^{k-3} a_i \cup \{ a_{k-1} \}$, $a = a_{k-2}$, $b =
    a_k$. By Lemma \ref{lem} $\{ a_1, \dots, a_k \}$ is not $\tau$-infrequent.
    \newline
    Case (ii): $\Gamma_0 > \min \{ \Gamma_1, \Gamma_2 \} + \tau = \Gamma_2 +
    \tau$. Then,
    \begin{align*}
      & |\cap_{i = 1}^{k-3} R_{a_i} \cap R_{a_k} \cap R_{a_{k-2}}| +
      |\cap_{i = 1}^{k-3} R_{a_i} \cap R_{a_k} \cap R_{a_{k-1}}| \\
      & \qquad > |\cap_{i = 1}^{k-3} R_{a_i} \cap R_{a_k}| + \tau.
    \end{align*}
    Let $I = \cup_{i = 1}^{k-3} a_i \cup \{ a_k \}$, $a = a_{k-2}$, $b =
    a_{k - 1}$. By Lemma \ref{lem} $\{ a_1, \dots, a_k \}$ is not
    $\tau$-infrequent. \qed
  \end{proof}

  In the final iteration (when $k = k_{max}$) we can use Lemma \ref{lem} and
  Corollary \ref{cor} to test for $\tau$-infrequency before carrying out the
  intersection at line \ref{alg:intersect}. If either test concludes that the
  itemset is not $\tau$-infrequent, then there is no need to perform the row
  intersection.

  \begin{example} \label{ex3}
    To illustrate the operation of Algorithm \ref{al:kyiv}, suppose $k_{max} =
    3$, $\tau = 1$ and consider the dataset:
    \begin{align*}
      A = \
      \begin{bmatrix}
        * & * & * & 4 & * \\
        1 & 2 & * & 4 & * \\
        1 & 2 & 3 & 4 & * \\
        1 & 2 & 3 & 4 & 5 \\
        1 & * & 3 & * & 5 \\
        * & 2 & 3 & * & 5 \\
        * & * & * & * & 5
      \end{bmatrix}
      \text{, where $*$ denotes a unique item.}
    \end{align*}
    The set $r_{A,\tau}$ contains the unique items marked by $*$. There are no
    uniform items, so $U_A = \emptyset$. There exists single partition of
    $I_{A,\tau}^\prime$ -- $(L_{A,\tau}, \emptyset)$, where it can be verified
    that \small
    \begin{align*}
      L_{A,\tau}^< &= \{ (1, 1, \{2, 3, 4, 5\}), (2, 2, \{2, 3, 4, 6\}), (3, 3,
      \{3, 4, 5, 6\}),
      (4, 4, \{1, 2, 3, 4\}), (5, 5, \{4, 5, 6, 7\}) \} \\
      &:= \{ a, b, c, d,
      e \}.
    \end{align*}
    \normalsize
    The prefix tree of $L_{A,\tau}^<$ is shown schematically in Figure
    \ref{fig1}. After line \ref{alg:prep} is executed ($P_1 = \{ a \}, P_2 =
    \{ b \}, P_3 = \{ c \}, P_4 = \{ d \}, P_5 = \{ e \}$) and the first level
    of the prefix tree is built. The first iteration of the main loop at line
    \ref{alg:while} (when $k = 2 < 3 = k_{max}$ and $t = 5$) is reproduced
    step-by-step below. Here, $1 \le i \le 4 = t - 1, i < j \le t$ and for each
    $(I, J)$ the highest order items are the items contained in $I$ and $J$
    (which never share a common prefix). The condition at line
    \ref{alg:kgr2condition} is false and there are no absent or uniform
    itemsets ($0 < |R_W| < \min(|R_I|, |R_J|)$ for each $(I, J)$) after
    intersection at line \ref{alg:intersect}:
    \small
    \begin{align*}
      I = P_1 = \{ a \}:\quad
      & J = P_2 = \{ b \},\ W = \{ a, b \},\ R_W = \{2, 3, 4\} \Rightarrow
      P_1^\prime = \{ a, b \} \\
      & J = P_3 = \{ c \},\ W = \{ a, c \},\ R_W = \{3, 4, 5\} \Rightarrow
      P_2^\prime = \{ a, c \} \\
      & J = P_4 = \{ d \},\ W = \{ a, d \},\ R_W = \{2, 3, 4\} \Rightarrow
      P_3^\prime = \{ a, d \} \\
      & J = P_5 = \{ e \},\ W = \{ a, e \},\ R_W = \{4, 5\} \Rightarrow
      P_4^\prime = \{ a, e \} \\
      I = P_2 = \{ b \} :\quad
      & J = P_3 = \{ c \},\ W = \{ b, c \},\ R_W = \{3, 4, 6\} \Rightarrow
      P_5^\prime = \{ b, c \} \\
      & J = P_4 = \{ d \},\ W = \{ b, d \},\ R_W = \{2, 3, 4\} \Rightarrow
      P_6^\prime = \{ b, d \} \\
      & J = P_5 = \{ e \},\ W = \{ b, e \},\ R_W = \{4, 6\} \Rightarrow
      P_7^\prime = \{ b, e \} \\
      I = P_3 = \{ c \} :\quad
      & J = P_4 = \{ d \},\ W = \{ c, d \},\ R_W = \{3, 4\} \Rightarrow
      P_8^\prime = \{ c, d \} \\
      & J = P_5 = \{ e \},\ W = \{ c, e \},\ R_W = \{4, 5, 6\} \Rightarrow
      P_9^\prime = \{ c, e \} \\
      I = P_4 = \{ d \}:\quad &J = P_5 = \{ e \},\ W = \{ d, e \},\ R_W = \{4\}
      \Rightarrow \text{print $\{ d, e \}$}
    \end{align*}
    \normalsize
    The second level of the prefix tree is now built: $P_1 = \{ a, b \}, P_2 =
    \{ a, c \}, P_3 = \{ a, d \},$ $P_4 = \{ a, e \}, P_5 = \{ b, c \}, P_6 =
    \{ b, d \}, P_7 = \{ b, e \}, P_8 = \{ c, d \}, P_9 = \{ c, e \}$.

    The second iteration of the main loop (when $k = 3 = k_{max}$ and $t = 9$)
    is reproduced step-by-step below. Here, $1 \le i \le 8 = t - 1, i < j \le
    t$:
    \small
    \begin{align*}
      I = P_1 = \{ a, b \}:\quad
      & J = P_2 = \{ a, c \},\ W = \{ a, b, c \} \\
      & J = P_3 = \{ a, d \},\ W = \{ a, b, d \} \\
      & J = P_4 = \{ a, e \},\ W = \{ a, b, e \},\ 
      R_W = \{4\} \Rightarrow \text{print $\{ a, b, e \}$} \\
      I = P_2 = \{ a, c \} :\quad
      & J = P_3 = \{ a, d \},\ W = \{ a, c, d \} \\
      & J = P_4 = \{ a, e \},\ W = \{ a, c, e \} \\
      I = P_3 = \{ a, d \}:\quad &J = P_4 = \{ a, e \},\ W = \{ a, d, e \} \\
      I = P_5 = \{ b, c \}:\quad
      & J = P_6 = \{ b, d \},\ W = \{ b, c, d \} \\
      & J = P_7 = \{ b, e \},\ W = \{ b, c, e \} \\
      I = P_6 = \{ b, d \}:\quad &J = P_7 = \{ b, e \},\ W = \{ b, d, e \} \\
      I = P_8 = \{ c, d \}:\quad &J = P_9 = \{ c, e \},\ W = \{ c, d, e \}
    \end{align*}

    At the ultimate level $k_{max}$, the support itemset test for minimality
    (line \ref{alg:support}), Lemma \ref{lem} (line \ref{alg:lem}) and
    Corollary \ref{cor} (line \ref{alg:cor}) are applied in that order to pairs
    of $2$-itemsets from $P$ which share a common prefix. Pairs $(\{ a, d \}$,
    $\{ a, e \})$, $(\{ b, d \}$, $\{ b, e \})$, $(\{ c, d \}$, $\{ c, e \})$
    are pruned by the support itemset test. Pairs $(\{ a, b \}$, $\{ a, c \})$,
    $(\{ a, b \}$, $\{ a, d \})$, $(\{ a, c \}$, $\{ a, d \})$, $(\{ b, c \}$,
    $\{ b, d \})$ are pruned by the lemma. Pairs $(\{ a, c \}$, $\{ a, e \})$,
    $(\{ b, c \}$, $\{ b, e \})$ are pruned by the corollary. Leaving only
    $(\{ a, b \}$, $\{ a, e \})$ as minimal unique itemset. \qed
  \end{example}

  \subsubsection{Correctness}

  \begin{theorem}
    Algorithm \ref{al:kyiv} terminates in finite time and finds all minimal
    $\tau$-infrequent itemsets of $I_A$ up to size $k_{max}$.
  \end{theorem}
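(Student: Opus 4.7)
The plan is to separate termination from correctness, and to argue correctness by induction on the prefix-tree level $k$. Termination is immediate: the outer \textbf{while} at line \ref{alg:while} increments $k$ by one and halts once $k > k_{max}$; every inner \textbf{foreach} ranges over the finite collection $\{P_i\}_{i=1}^{t}$ (or its successor); and all pre-processing touches each of the finitely many items of $A$ a bounded number of times. Hence only finitely many operations will ever be performed.

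For correctness I would first invoke Proposition \ref{prop:reconstruct} to reduce the task to enumerating $\mathcal{L}_{A,\tau}$: line 5 prints $r_{A,\tau}$ (the $k=1$ case), and the inner \textbf{foreach} over $w \in W$ applies Proposition \ref{prop:partition} to lift each printed $W \in \mathcal{L}_{A,\tau}$ to its companions in $\mathcal{\bar L}_{A,\tau}$. The core step will be an induction with the invariant that, at the top of iteration $k \ge 2$, the list $\{P_i\}_{i=1}^{t}$ contains, in the lexicographic order induced by $L_{A,\tau}^<$ and with correctly cached row sets, exactly the minimal non-$\tau$-infrequent $(k-1)$-itemsets of $L_{A,\tau}$. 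The base $k=2$ follows directly from line \ref{alg:prep}, since each singleton of $L_{A,\tau}^<$ is trivially minimal and non-$\tau$-infrequent by construction. For the inductive step I would argue that every ordered $k$-itemset $W = \{a_1 < \dots < a_k\}$ arises from the unique common-prefix pairing $(I,J) = (W \setminus \{a_k\},\, W \setminus \{a_{k-1}\})$; Proposition \ref{prop:mono} forces both $I$ and $J$ to be minimal and non-$\tau$-infrequent whenever $W$ is minimal, so by the induction hypothesis both lie as consecutive entries of $\{P_i\}$ and the \textbf{break} on prefix mismatch cannot skip them. The support itemset test at line \ref{alg:support} is then sound by Proposition \ref{prop:mono} using the cached row sets, and Lemma \ref{lem} and Corollary \ref{cor} at lines \ref{alg:lem}--\ref{alg:cor} only reject candidates whose non-$\tau$-infrequency they establish.

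The main obstacle, I expect, will be the case analysis around the \textbf{continue} triggered when $|R_W|=0$ or $|R_W| = \min(|R_I|, |R_J|)$. For a uniform $W$ the argument is direct: $R_W$ equals one of $R_I, R_J$, so a proper subset of $W$ has the same frequency as $W$; hence $W$ and every superset of $W$ in the prefix tree share a proper subset of equal frequency, making them non-minimal whenever $\tau$-infrequent, so both the skip and the refusal to add $W$ to $\{P_i^\prime\}$ are sound. The absent case $|R_W|=0$ is the one I would spend the most effort on: I need to show that, in the setting targeted by Definition \ref{def:mri} and the pre-processing of $L_{A,\tau}$ (where $|R_a| > \tau$ for every $a \in L_{A,\tau}$), any minimal $\tau$-infrequent itemset encountered with empty support is either already reported through a subset generated at an earlier level or lies outside the intended output, so no element of $\mathcal{L}_{A,\tau}$ is lost. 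Once this corner is resolved, combining the preserved invariant with Proposition \ref{prop:partition} and the line 5 output shows that the algorithm prints exactly $\mathcal{I}_{A,\tau}$ restricted to sizes at most $k_{max}$, closing the induction and the proof.
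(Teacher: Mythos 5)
Your overall strategy is sound and is in fact more rigorous than the paper's own argument: the paper disposes of correctness in a few sentences, asserting that every branch not pruned is visited, that each pruning rule (the support itemset test at line \ref{alg:support}, Lemma \ref{lem}, Corollary \ref{cor}) only discards non-minimal or non-$\tau$-infrequent candidates, and closing with a short contradiction via Proposition \ref{prop:reconstruct}. Your levelwise induction with an explicit invariant on $\{P_i\}$ is a genuinely different and more careful decomposition, and your observation that every $k$-itemset $W$ arises from the unique common-prefix pair $(W\setminus\{a_k\},\, W\setminus\{a_{k-1}\})$, both of which are minimal and non-$\tau$-infrequent by Proposition \ref{prop:mono} whenever $W$ is minimal, is exactly the right skeleton for showing no candidate is missed.

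Two concrete problems remain. First, your invariant is too strong as stated: $\{P_i\}$ does not contain \emph{exactly} the minimal non-$\tau$-infrequent $(k-1)$-itemsets, because the algorithm also discards uniform candidates with $|R_W| = \min(|R_I|,|R_J|)$, and such a $W$ can be minimal and non-$\tau$-infrequent. You correctly note the skip is harmless (if $R_W = R_I$ then every $\tau$-infrequent superset $W'$ of $W$ has a proper subset with the same row set and so is non-minimal), but this contradicts the ``exactly''; you must weaken the invariant to something like ``$\{P_i\}$ contains every $(k-1)$-itemset that is a proper subset of some minimal $\tau$-infrequent itemset of size at most $k_{max}$,'' and then re-check that the support itemset test, implemented as a lookup in $\{P_i\}$, remains sound under the weaker invariant (it does: a missing subset was removed either because it is $\tau$-infrequent or because no superset of it can be minimal and $\tau$-infrequent). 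Second, the absent case cannot be closed in the way you hope: an itemset $W$ with $R_W=\emptyset$ whose $(|W|-1)$-subsets all have frequency greater than $\tau$ genuinely satisfies Definition \ref{def:mri}, yet it is never printed and no subset reports it on its behalf. The theorem is therefore literally false unless minimal $\tau$-infrequent itemsets with empty support are excluded from the claim (they are irrelevant to the quasi-identifier application, which is plainly the intent of the ``skip absent'' branch); the paper's proof silently ignores this, so your instinct to dwell on it is right, but the resolution is to restrict the statement, not to prove that nothing is lost.
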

  \begin{proof}
    Pre-processing from the beginning to the main loop (line \ref{alg:while})
    is done in finite time: to compute $I_A$ and $L_{A,\tau}$ algorithm goes
    through the $A$ elements and counts their frequencies while the size of $A$
    is finite ($n, m < +\infty$); printing $r_{A,\tau}$, sorting $L_{A,\tau}$
    and iterating $|L_{A,\tau}^<|$ times the loop at line \ref{alg:prep} all
    take finite time as $|r_{A,\tau}|, |L_{A,\tau}| = |L_{A,\tau}^<| <
    +\infty$. The search space of the algorithm is the prefix tree which is
    finite as $I_A$ is finite. If there is no pruning then Algorithm
    \ref{al:kyiv} goes through every branch of maximum length $k_{max}$ of the
    tree, otherwise it processes even less number of branches. It takes finite
    time to process a single branch, that is: navigate it, intersect itemset
    rows of finite size and either print (Proposition \ref{prop:partition}
    takes finite time because $|W|, |I_{A,\tau}^\prime \setminus L_{A,\tau}| <
    +\infty$) or store the appropriate itemset. Consequently the algorithm
    terminates in finite time processing all the itemsets of maximum size
    $k_{max}$ that have not been thrown out by the support itemset test (line
    \ref{alg:support}), Lemma \ref{lem} (line \ref{alg:lem}) and Corollary
    \ref{lem} (line \ref{alg:cor}).

    Suppose there is a minimal $\tau$-infrequent itemset $I \in 2^{I_A}$ that
    is not found by the algorithm. Proposition \ref{prop:reconstruct} means
    that the set of all $\tau$-infrequent and minimal itemsets $\mathcal{I}_{A,
    \tau} \subset 2^{I_A}$ can be described by any chosen partition $(L_{A,
    \tau}, \bar{L}_{A,\tau})$. Thus, either $I$ contains item which does not
    belong to $L_{A,\tau}$ or $|I| > k_{max}$. The former is impossible while
    the latter does not contradict the theorem. \qed
  \end{proof}

  \subsubsection{Parallelisation}

  Algorithm \ref{al:kyiv} can be readily parallelised using shared-memory
  threads. Namely, at level $k$ within the prefix tree assign all vertices
  sharing the same parent at level $k - 1$ within the prefix tree to the same
  thread and then in each thread execute the loop starting at line
  \ref{alg:thread1} in Algorithm \ref{al:kyiv}. The shared memory allows each
  thread access to the prefix tree information stored in $P_j$, $j \in \{ i +
  1, \cdots, t \}$, but there is otherwise no need for inter-thread
  communication.

  When the number of available threads is less than the number of parent
  vertices at level $k - 1$ in the prefix tree, work must be allocated amongst
  the threads. As already discussed, the work associated with each parent
  vertex is dominated by the number of row intersections to be carried out.
  This number can be accurately estimated based on the number of children of
  the parent vertex, and so the work associated with each parent vertex
  estimated in advance. Using these work estimates, load-balanced scheduling of
  work amongst the threads can then be efficiently realised. As discussed in
  more detail in Section \ref{sec:exp}, in this way we can ensure that the
  running time of all threads is similar thereby enhancing the performance gain
  from parallelisation -- we note that imbalanced thread run times is known to
  be a key bottleneck in the parallelisation of state-of-the-art depth-first
  approaches such as SUDA2 and MINIT \cite{haglin09}.

  \begin{example}
    Recall Example \ref{ex3}. Let $t = 3$ be the number of threads. When $k =
    2$, Algorithm \ref{al:kyiv} allocates jobs between the $3$ threads: first
    an empty array $T$ of size $t$ is created; then for each item in $L_{A,
    \tau}^<$ the number of higher order items is stored in $T$ at the cell
    which has the minimum value (if there are several such cells, the left-most
    is chosen). As soon as $T$ is filled in, all threads start work. In our
    example $T = \{ 4, 3, 3 \}$ and the first thread is assigned itemsets,
    $\{ a, b \}$, $\{ a, c \}$, $\{ a, d \}$, $\{ a, e \}$, the second $\{ b,
    c \}$, $\{ b, d \}$, $\{ b, e \}$ and the third $\{ c, d \}$, $\{ c, e \}$,
    $\{ d, e \}$. Row intersection of each ordered pair reveals the unique
    $2$-itemsets and these itemsets are stored in $P^\prime$: $\{ a, b \}$,
    $\{ a, c \}$, $\{ a, d \}$, $\{ a, e \}$, $\{ b, c \}$, $\{ b, d \}$,
    $\{ b, e \}$, $\{ c, d \}$ and $\{ c, e \}$; at the next iteration they
    will be copied into $P$ for the $k = 3$ analysis. Only $\{ d, e \}$ will be
    printed out as unique and minimal.

    When $k = 3$ (the ultimate level $k_{max}$), $T = \{ 6, 3, 1 \}$ and the
    first thread is assigned itemsets $(\{ a, b \}$, $\{ a, c \})$, $(\{ a,
    b \}$, $\{ a, d \})$, $(\{ a, b \}$, $\{ a, e \})$, $(\{ a, c \}$, $\{ a,
    d \})$, $(\{ a, c \}$, $\{ a, e \})$, $(\{ a, d \}$, $\{ a, e \})$, the
    second $(\{ b, c \}$, $\{ b, d \})$, $(\{ b, c \}$, $\{ b, e \})$, $(\{ b,
    d \}$, $\{ b, e \})$ and the third $(\{ c, d \}$, $\{ c, e \})$. As in
    Example \ref{ex3}, the support itemset test, Lemma \ref{lem} and Corollary
    \ref{cor} eliminates all pairs inside the threads except for $(\{ a, b \}$,
    $\{ a, e \})$. \qed
  \end{example}

\section{Experiments} \label{sec:exp}

  Unless otherwise stated, all experiments in this section were carried out
  using ascending itemlist order, Lemma \ref{lem} and Corollary \ref{cor}.

  \subsection{Hardware and Software Setup} \label{sec:hardsoft}

  We implemented Algorithm \ref{al:kyiv} in Java (version 1.7.0\_25) using the
  hppc (version 0.5.2) library, which can be found at
  \url{http://labs.carrotsearch.com/hppc.html}. For comparison with the serial
  version of Algorithm \ref{al:kyiv}, we also implemented a state-of-the-art
  algorithm MINIT \cite{minit} in Java (using the C++ implementation kindly
  provided by the developers of MINIT) and used the C++ implementation of the
  MIWI algorithm \cite{cagliero13}, kindly provided by its developers.

  For testing we used an Amazon cr1.8xlarge instance with an Intel Xeon CPU
  E5-2670 $0$ @ $2.60$GHz $32$ processor (up to $32$ hyperthreads), $244$Gb of
  memory, $64$-bit Linux operating system (kernel version 3.4.62-53.42.
  amzn1.x86\_64 of Red Hat 4.6.3-2 Linux distribution (Amazon Linux AMI release
  2013.09)).

  \subsection{Domain-Agnostic Performance} \label{sec:domagnperf}

  \subsubsection{Randomised Datasets}

  We begin by investigating performance in a domain-agnostic manner using
  randomised datasets. Each randomised dataset consists of $50,000$ rows with
  each row having $25$ columns. For each column, the size $D$ of the domain of
  element values is selected i.i.d. uniformly at random from the set $\{ 10,
  \cdots, 100 \}$. The elements within each column are then selected i.i.d.
  uniformly at random from domain $\{ 1, \cdots, D \}$. On average, for these
  datasets $L_A$ contained $1352$ items.

  \subsubsection{Execution Time}

  \begin{figure}
    \centering
    \includegraphics[width=0.5\textwidth]{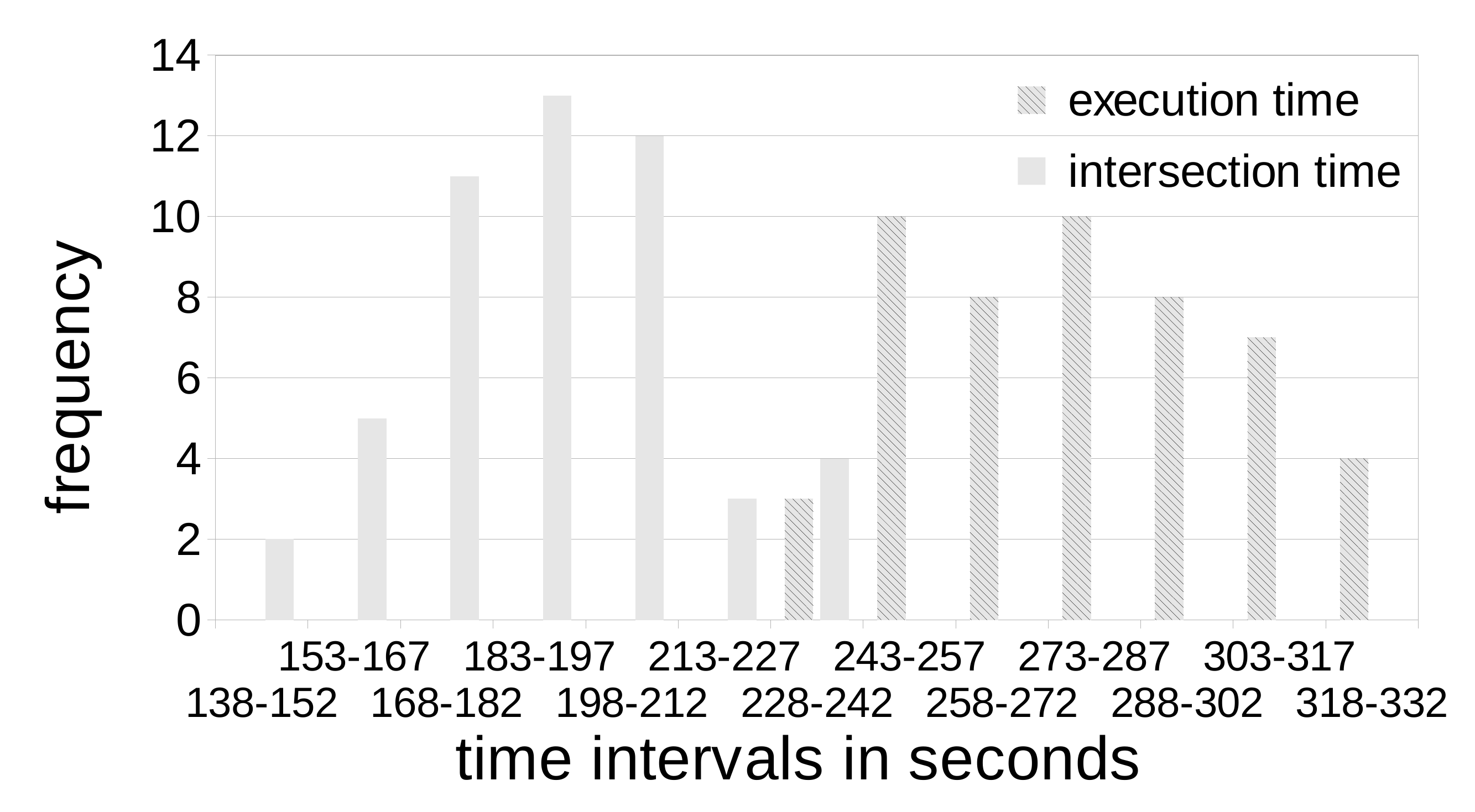}
    \captionof{figure}{Distribution of execution and intersection time for
    randomised datasets, $k_{max} = 5$, $\tau = 1$.}
    \label{fig2}
  \end{figure}

  Figure \ref{fig2} shows the measured distribution of execution times for
  Algorithm \ref{al:kyiv} over $50$ randomised datasets when $k_{max} = 5$,
  $\tau = 1$. It can be seen that the execution times are relatively tightly
  bunched around the mean value of $280$ seconds. Also shown in Figure
  \ref{fig2} is the corresponding time expended on calculating row
  intersections at line \ref{alg:intersect} of Algorithm \ref{al:kyiv}. The
  mean intersection time is $190$ seconds, so $68\%$ of the execution time is
  expended on row intersections, confirming that these are indeed the primary
  bottleneck in Algorithm \ref{al:kyiv}. Note that the fraction of execution
  time expended on row intersections depends on $k_{max}$ and tends to increase
  as $k_{max}$ decreases \emph{e.g.} when $k_{max} = 3$ row intersections
  absorb $80\%$ of the execution time.

  \subsubsection{Prefix Tree Pruning}

  \begin{figure}
    \centering
    \includegraphics[width=0.5\textwidth]{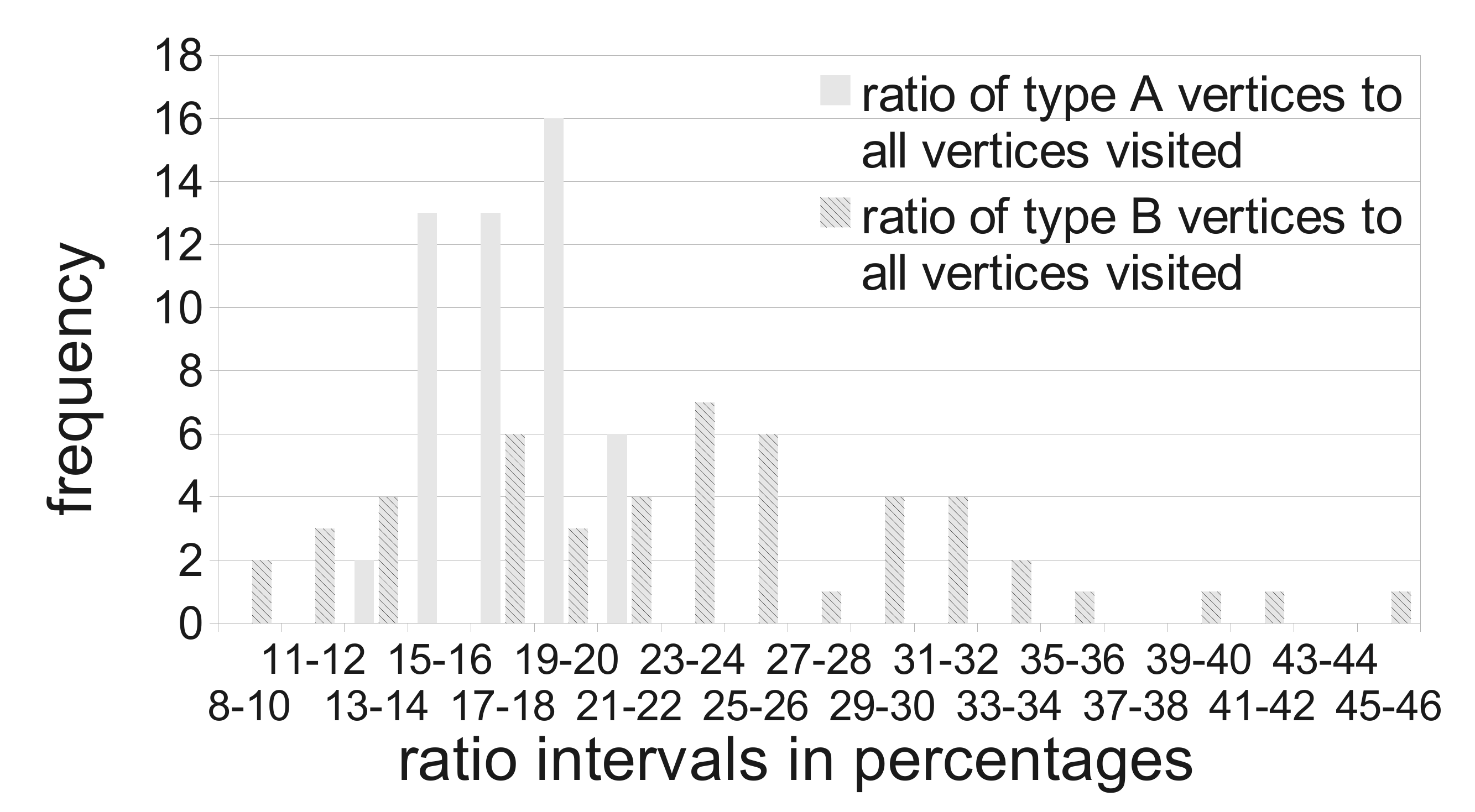}
    \captionof{figure}{Distribution of prefix tree vertices traversed for
    randomised datasets, $k_{max} = 5$, $\tau = 1$.}
    \label{fig3}
  \end{figure}

  Algorithm \ref{al:kyiv} carries out online pruning of the prefix tree so as
  to avoid walking the full prefix tree. Importantly, it also tries to avoid
  carrying out unnecessary row intersections. We can evaluate the efficiency of
  the latter by distinguishing between three types of vertices visited:
  vertices that correspond to minimal $\tau$-infrequent itemsets (A), vertices
  which are visited but for which a row intersection is not performed (B) and
  the rest of the vertices visited (C). Figure \ref{fig3} shows the
  distribution of the ratios of the number of vertices of types A and B to the
  total number of prefix tree vertices visited by the algorithm over $50$
  randomised datasets when $k_{max} = 5$. On average $17.5\%$ of the vertices
  visited are type A vertices and $23\%$ type B vertices, although sometimes up
  to $45\%$ of the vertices visited are of type $B$.

  \subsubsection{Impact of Ordering Used for \texorpdfstring{$L_{A,\tau}$}
  {Itemlist}} \label{sec:pruningperf}

  As already noted in Section \ref{sec:pruning}, the ordering used to sort set
  $L_{A,\tau}$ to obtain $L_{A,\tau}^<$ can be expected to have an impact on
  the amount of pruning of the prefix tree achieved, and so on the execution
  time of Algorithm \ref{al:kyiv}. To investigate this further, we collected
  performance measurements for three different choices of ordering: (i)
  ascending order, (ii) descending order (iii) random order (\emph{i.e.} we
  draw a permutation uniformly at random from the set of permutations mapping
  from $\{ 1, \cdots, |L_{A,\tau}| \}$ to itself and apply this permutation to
  obtain $L_{A,\tau}^<$).

  \begin{figure}
    \centering
    \includegraphics[width=0.6\textwidth]{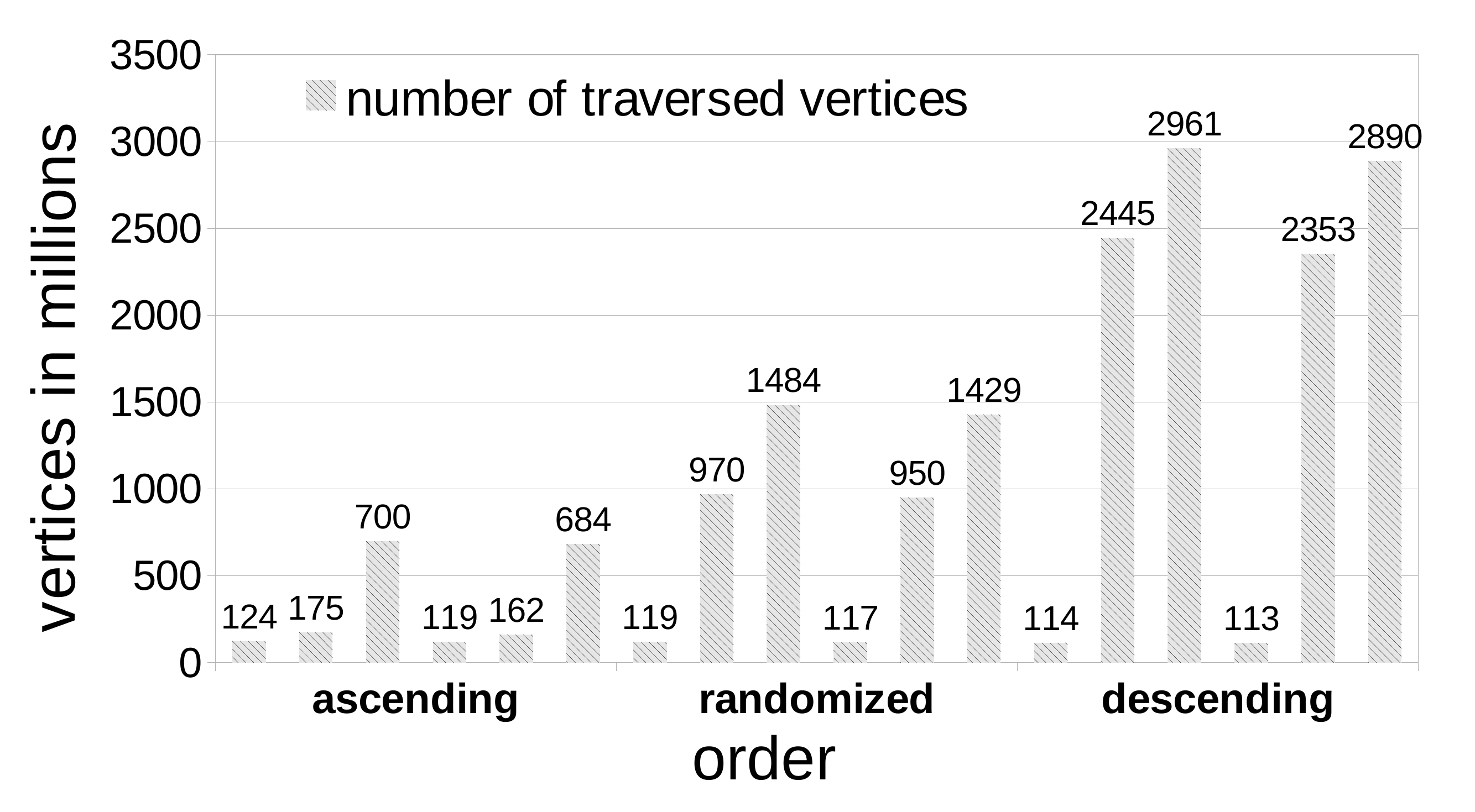}
    \captionof{figure}{Prefix tree vertices traversed vs ordering used for
    $L_{A,\tau}$, average over $10$ randomised datasets, $k_{max} = 5$, $\tau =
    1$. For each ordering $6$ values are shown: in the first three Lemma
    \ref{lem} and Corollary \ref{cor} are used, in the second three these are
    not used; in each group of three values the first value represents the
    number of vertices of type A, the second the number of vertices of type B
    and the third the total number of vertices traversed (that is of type A, B
    and C).}
    \label{fig4}
  \end{figure}

  Figure \ref{fig4} plots the numbers of prefix tree vertices of types A, B and
  C visited by Algorithm \ref{al:kyiv} vs the ordering of $L_{A,\tau}$ used. In
  this figure data is presented for each of the three orderings (ascending,
  randomised, descending) and for when Lemma \ref{lem}/Corollary \ref{cor} are
  used or not. That is, $6$ experiment variants are compared.

  \begin{figure}
    \centering
    \includegraphics[width=0.5\textwidth]{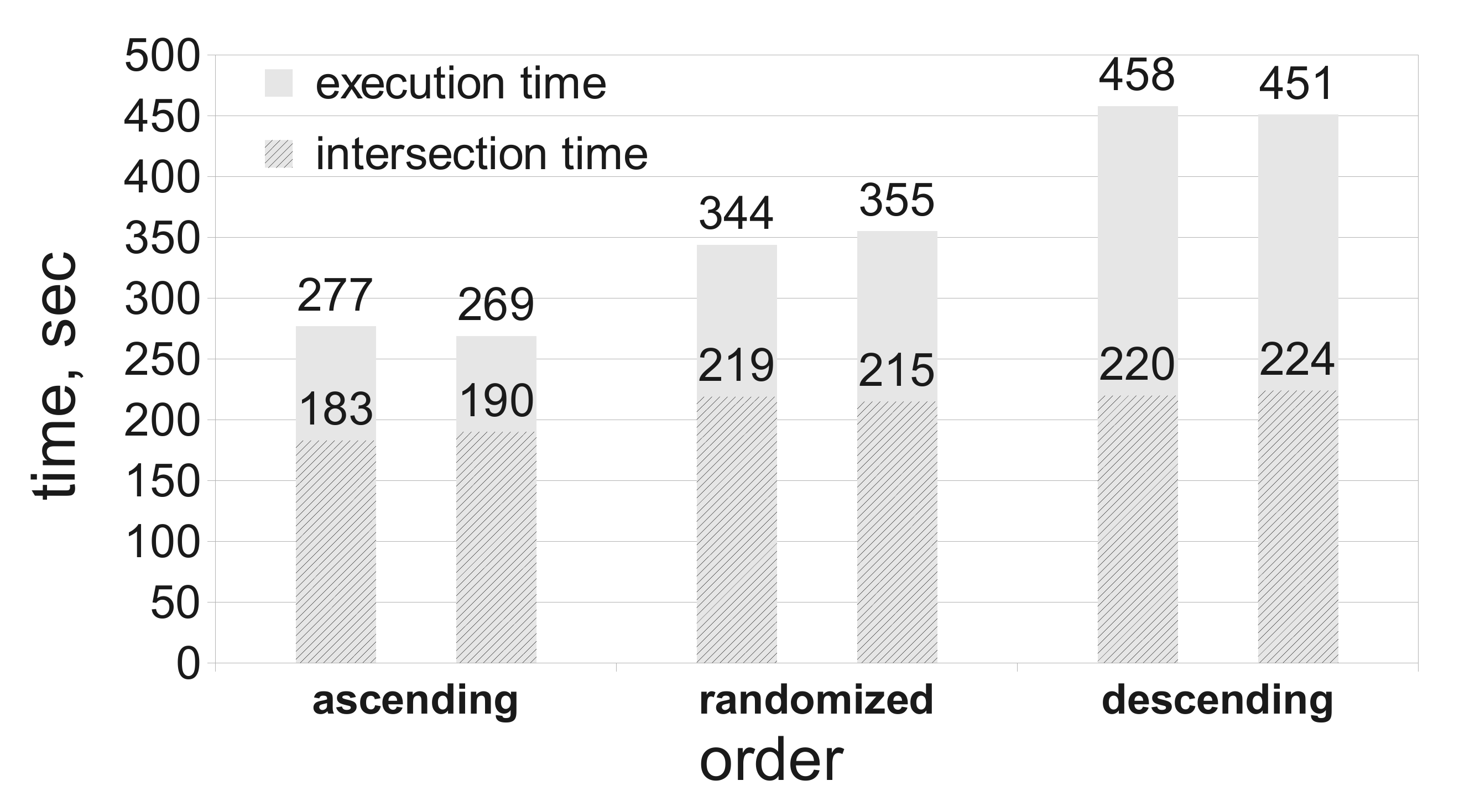}
    \captionof{figure}{Intersection and execution time vs ordering used for
    $L_{A,\tau}$, average over $10$ randomised datasets, $k_{max} = 5$, $\tau =
    2$ (in the left bar Lemma \ref{lem}/Corollary \ref{cor} are used, in the
    right bar they are not used).}
    \label{fig5}
  \end{figure}

  It can be seen that use of ascending order significantly reduces the total
  number of vertices visited, yielding a reduction of roughly a factor of $2$
  compared to use of a randomised ordering and a factor of $4$ compared to
  descending order. The number of type A vertices visited is, as expected,
  essentially constant across the tests. However, the number of type B vertices
  changes significantly and varies such that the number of vertices of type C
  remains roughly constant. Observe that use of Lemma \ref{lem} and Corollary
  \ref{cor} has little impact on performance in these tests. We will revisit
  this in Section \ref{sec:specdats} where we find that they can speed the
  runtime up by more than $50\%$.

  Figure \ref{fig5} plots the corresponding intersection and execution time vs
  the ordering of $L_{A,\tau}$ used. It can be seen that the execution time is
  more sensitive to the ordering than the intersection time. When combined with
  Figure \ref{fig4} this allows us to conclude that it is the number of type B
  vertices that varies strongly with ordering (the number of type A and type C
  vertices stays nearly constant) and that ascending order reduces execution
  time primarily by reducing the number of type B vertices i.e. by more
  effective pruning of the search tree which reduces the overall number of
  vertices visited.

  \subsubsection{Impact of Dataset Parameters}

  To investigate the scaling behaviour of Algorithm \ref{al:kyiv} to larger
  datasets we generated a randomised dataset with $1,000,000$ rows and $40$
  columns yielding an itemlist of size $2,179$.

  \begin{figure}
    \centering
    \subfloat[$m = 40$]{
      \includegraphics[width=0.485\textwidth]
      {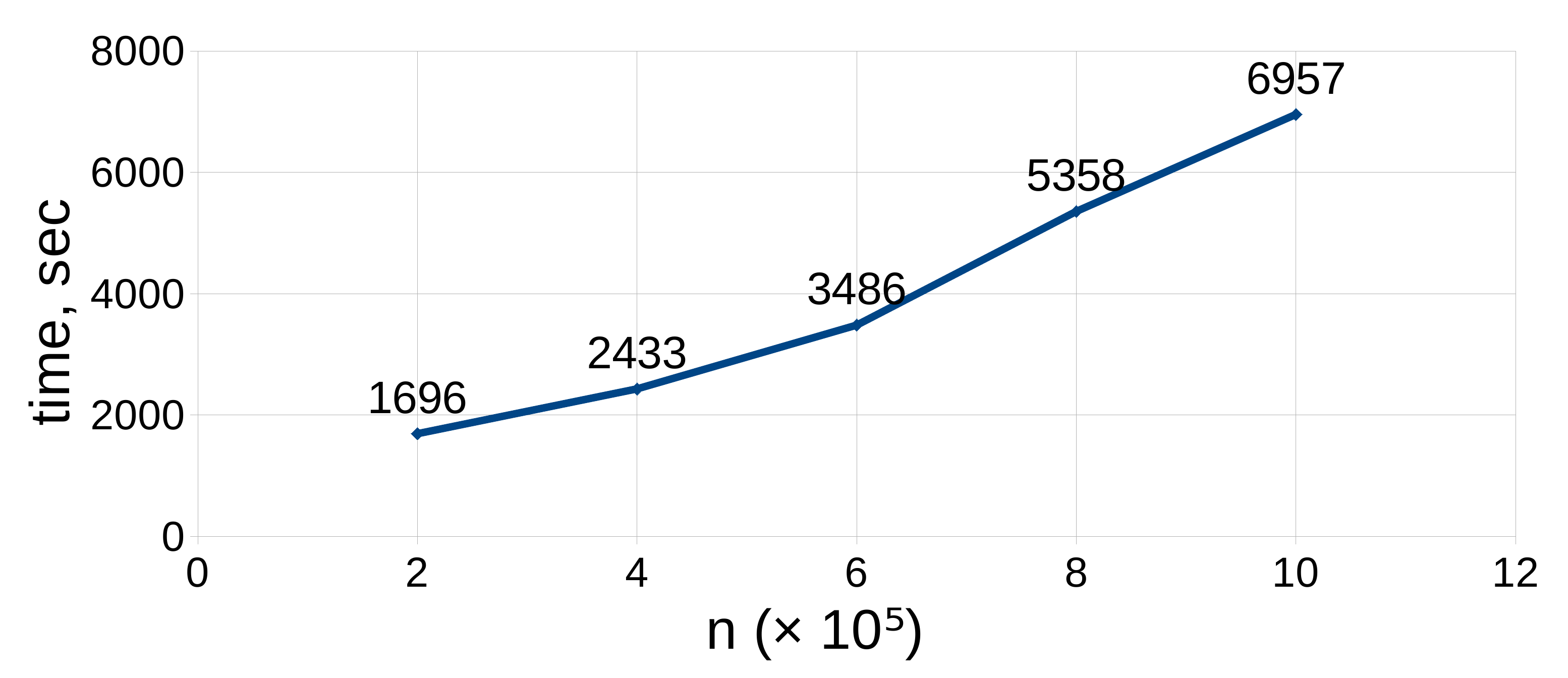}
      \label{fig6}
    }
    \subfloat[$n = 1,000,000$]{
      \includegraphics[width=0.485\textwidth]
      {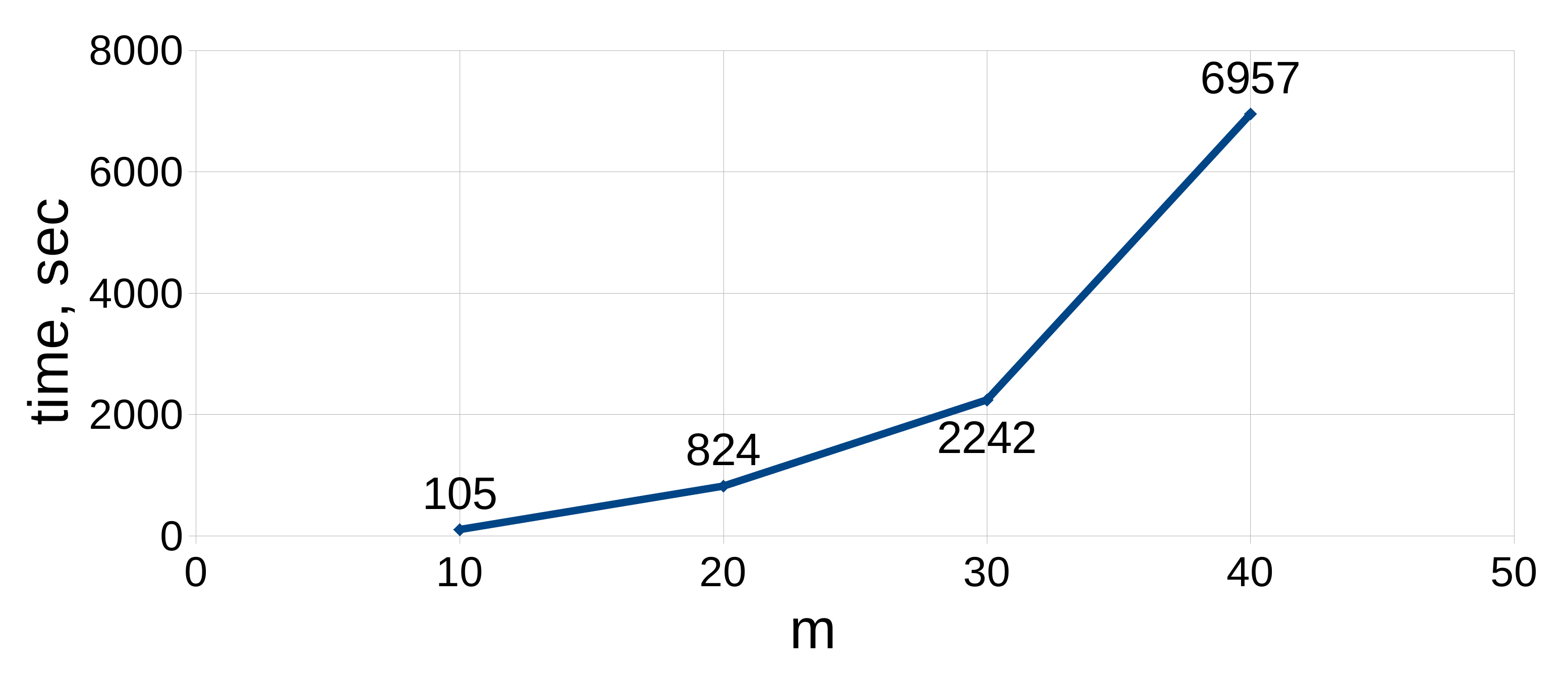}
      \label{fig7}
    }
    \captionof{figure}{Execution time vs number of rows $n$ and columns $m$ for
    a randomised dataset, $k_{max} = 3$, $\tau = 1$.}
  \end{figure}

  Taking the first $n$ rows, Figure \ref{fig6} plots the execution time of
  Algorithm \ref{al:kyiv} versus $n$ for $k_{max} = 3$, $\tau = 1$. It can be
  seen that the execution time is approximately linear in $n$, and so scales
  well to larger datasets. Although not plotted, memory usage also increased
  only gradually from $5.6$Gb when $n = 200,000$ to $6$Gb when $n = 1,000,000$.

  Taking the first $m$ columns of the dataset, Figure \ref{fig7} plots the
  execution time versus $m$ for $k_{max} = 3$, $\tau = 1$. It can be seen that
  the execution time is approximately exponential in $m$, and so the algorithm
  scales less well to datasets with a large number of columns (the size of
  corresponding itemlist increased from $520$ to $2,179$). Note that the memory
  usage also increases quite rapidly with $m$, from $0.9$Gb when $m = 10$ to
  $6$Gb when $m = 40$.

  \subsection{Domain-Specific Performance}

  \subsubsection{Datasets} \label{sec:datasets}

  In this section we present performance measurements for four domain-specific
  datasets:
  \begin{enumerate}
    \item The Connect dataset is available from \url{http://fimi.ua.ac.be/data}
    and contains all legal 8-ply positions in the game of connect-4 in which
    neither player has won yet, and in which the next move is not forced. There
    are $67,557$ rows, $43$ columns (one for each of the 42 connect-4 squares
    together with an outcome column - win, draw or lose) and $129$ items. It
    was one of the most computationally challenging datasets for which MINIT
    was evaluated in \cite{minit}.
    \item The Pumsb dataset is census data for population and housing from the
    PUMS (Public Use Microdata Sample). This dataset is available from
    \url{http://fimi.ua.ac.be/data}. There are $49,046$ rows, $74$ columns and
    $1,958$ items.
    \item The Poker dataset is available from
    \url{http://archive.ics.uci.edu/ml/datasets.html}. Each record is an
    example of a hand consisting of five playing cards drawn from a standard
    deck of 52 cards. Each card is described using two attributes (suit and
    rank), for a total of 10 predictive attributes. There is one Class
    attribute that describes the "Poker Hand". We removed the last attribute to
    form a new dataset with $1,000,000$ rows, $10$ columns and $117$ items.
    \item The USCensus1990 dataset, available from
    \url{http://archive.ics.uci.edu/ml/datasets.html}, was collected as part of
    the 1990 census. We considered a subset of this dataset consisting of the
    first $200,000$ rows and $68$ columns, which contained $8,009$ items.
  \end{enumerate}

  \subsubsection{Execution Time vs \texorpdfstring{$k_{max}$}{Maximum Itemset
  Size}} \label{sec:specdats}

  All measurements in the current section are averaged over three consecutive
  runs of each algorithm.

  \begin{figure}
    \centering
    \subfloat[$\tau = 1$]{
      \includegraphics[width=0.485\textwidth]
      {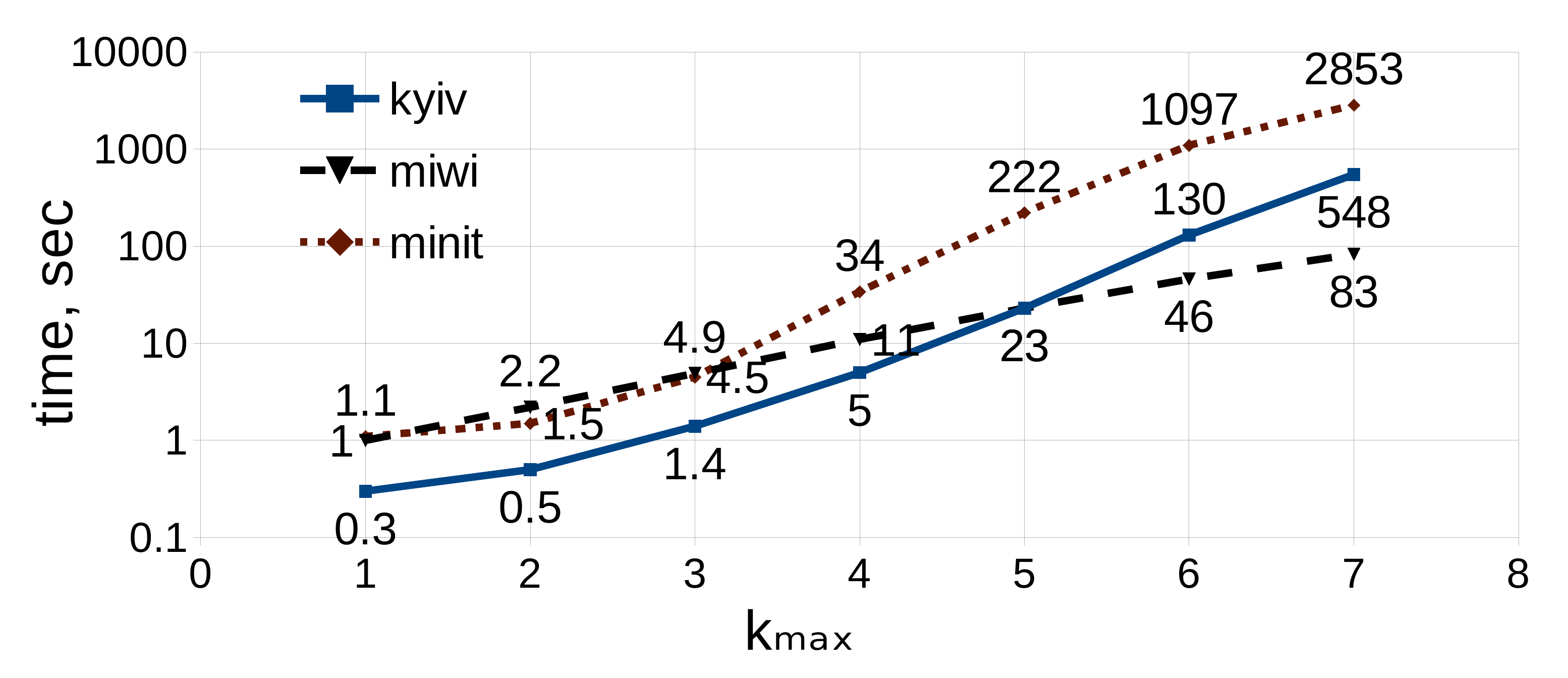}
      \label{fig8}
    }
    \subfloat[$\tau = 5$]{
      \includegraphics[width=0.485\textwidth]
      {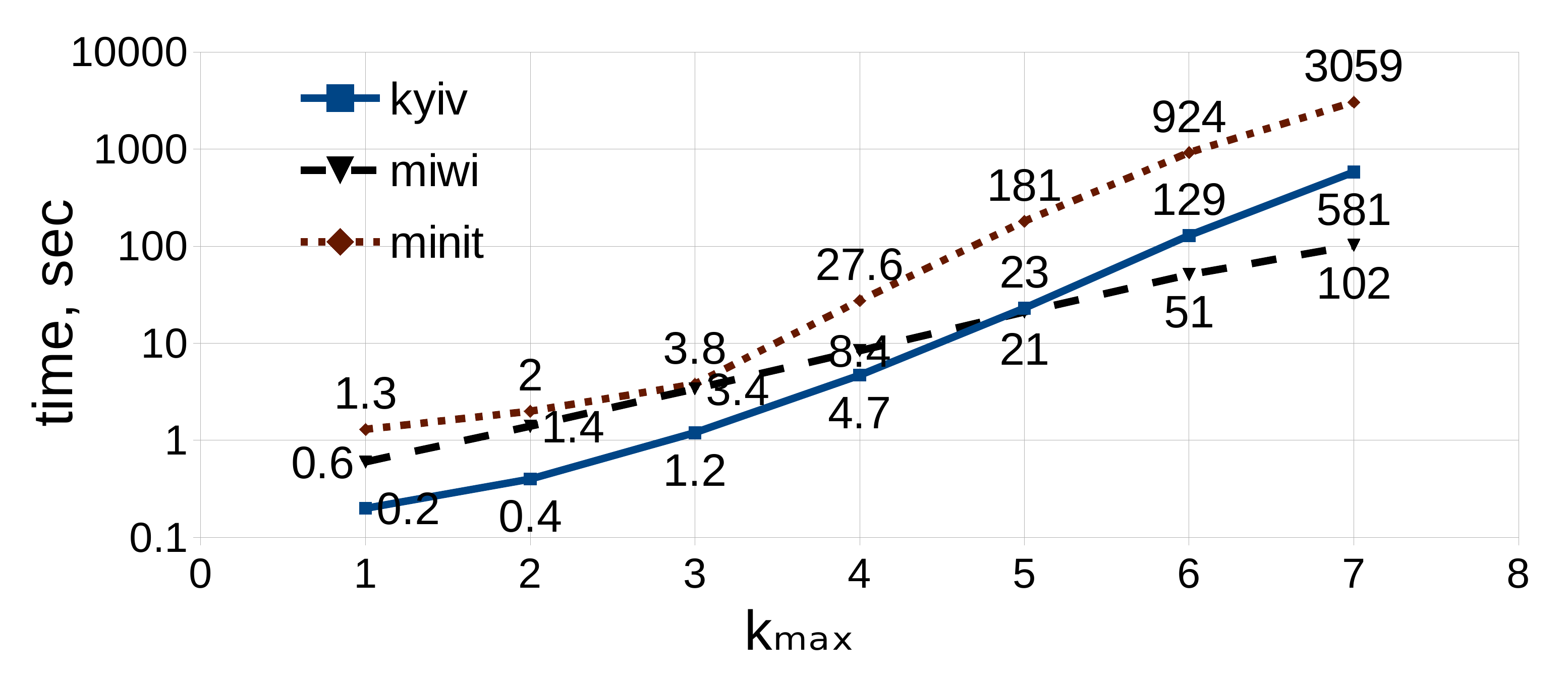}
      \label{fig9}
    }\\
    \subfloat[$\tau = 10$]{
      \includegraphics[width=0.485\textwidth]
      {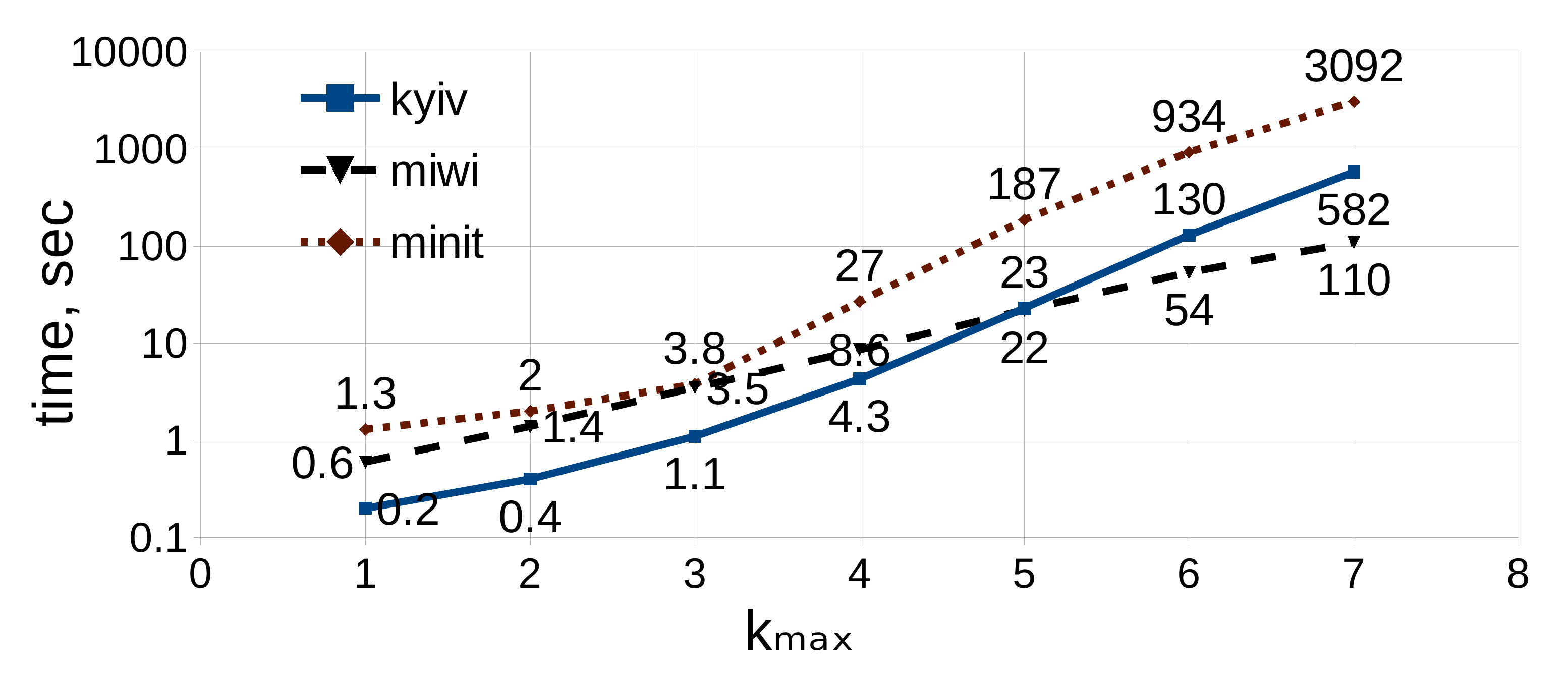}
      \label{fig10}
    }
    \subfloat[$\tau = 100$]{
      \includegraphics[width=0.485\textwidth]
      {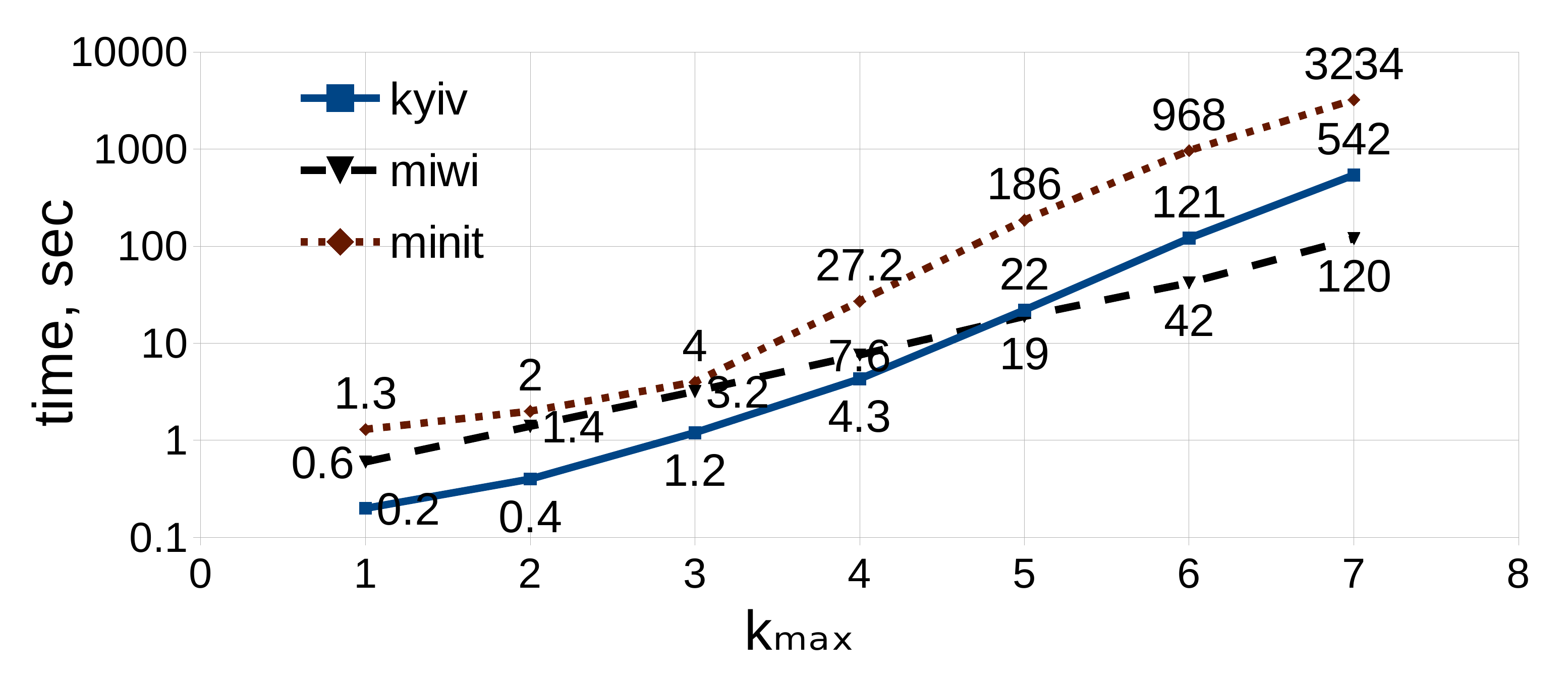}
      \label{fig11}
    }
    \captionof{figure}{Execution time vs $k_{max}$ for Connect dataset.}
    \label{fig8all}
  \end{figure}

  \begin{figure}
    \centering
    \subfloat[$\tau = 1$]{
      \includegraphics[width=0.485\textwidth]
      {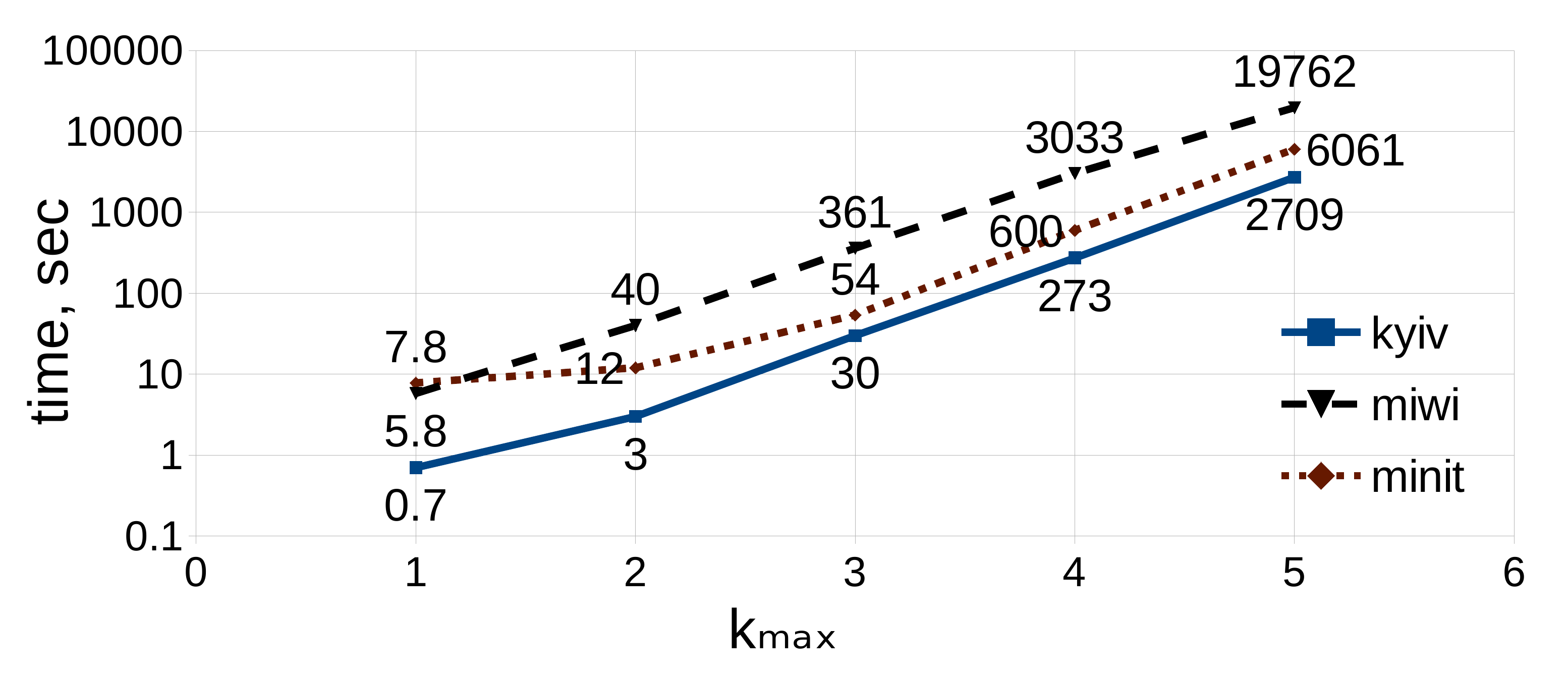}
      \label{fig12}
    }
    \subfloat[$\tau = 5$]{
      \includegraphics[width=0.485\textwidth]
      {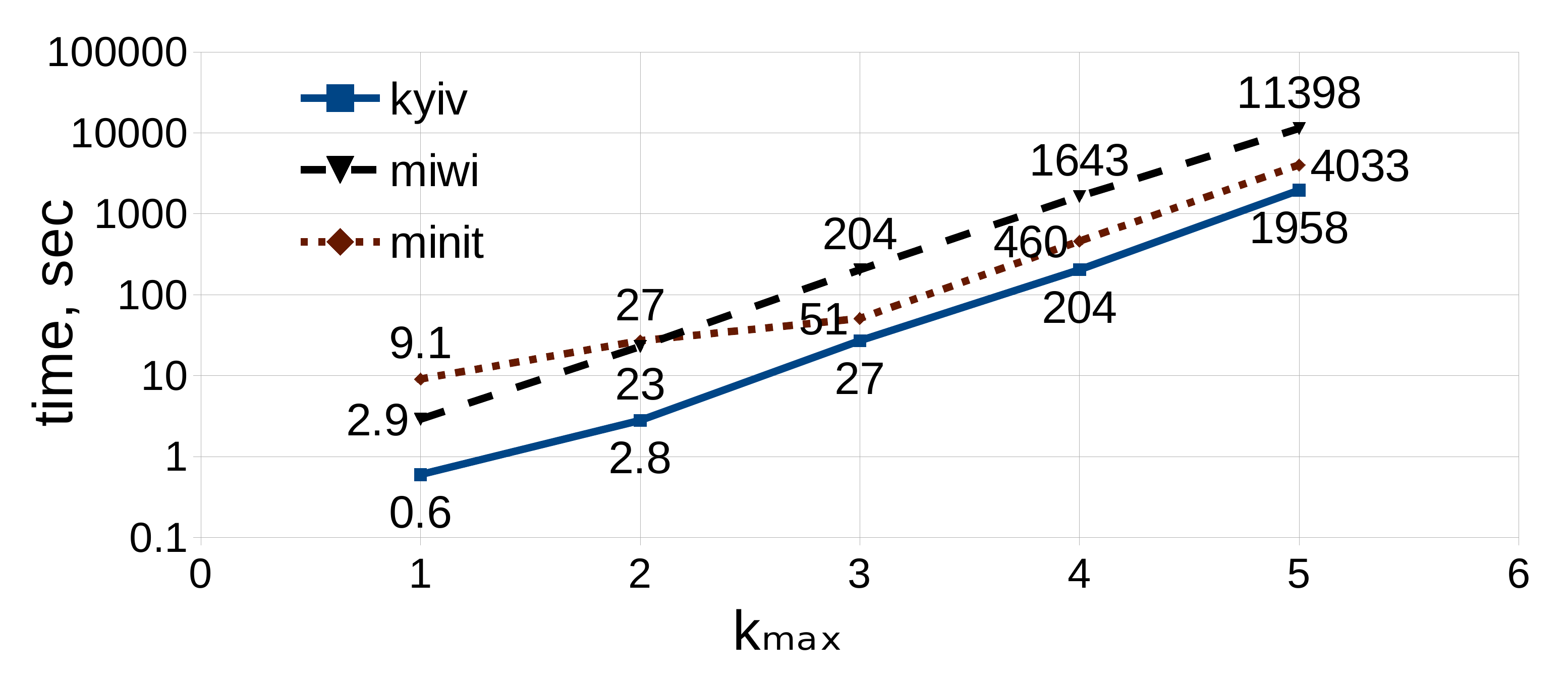}
      \label{fig13}
    }\\
    \subfloat[$\tau = 10$]{
      \includegraphics[width=0.485\textwidth]
      {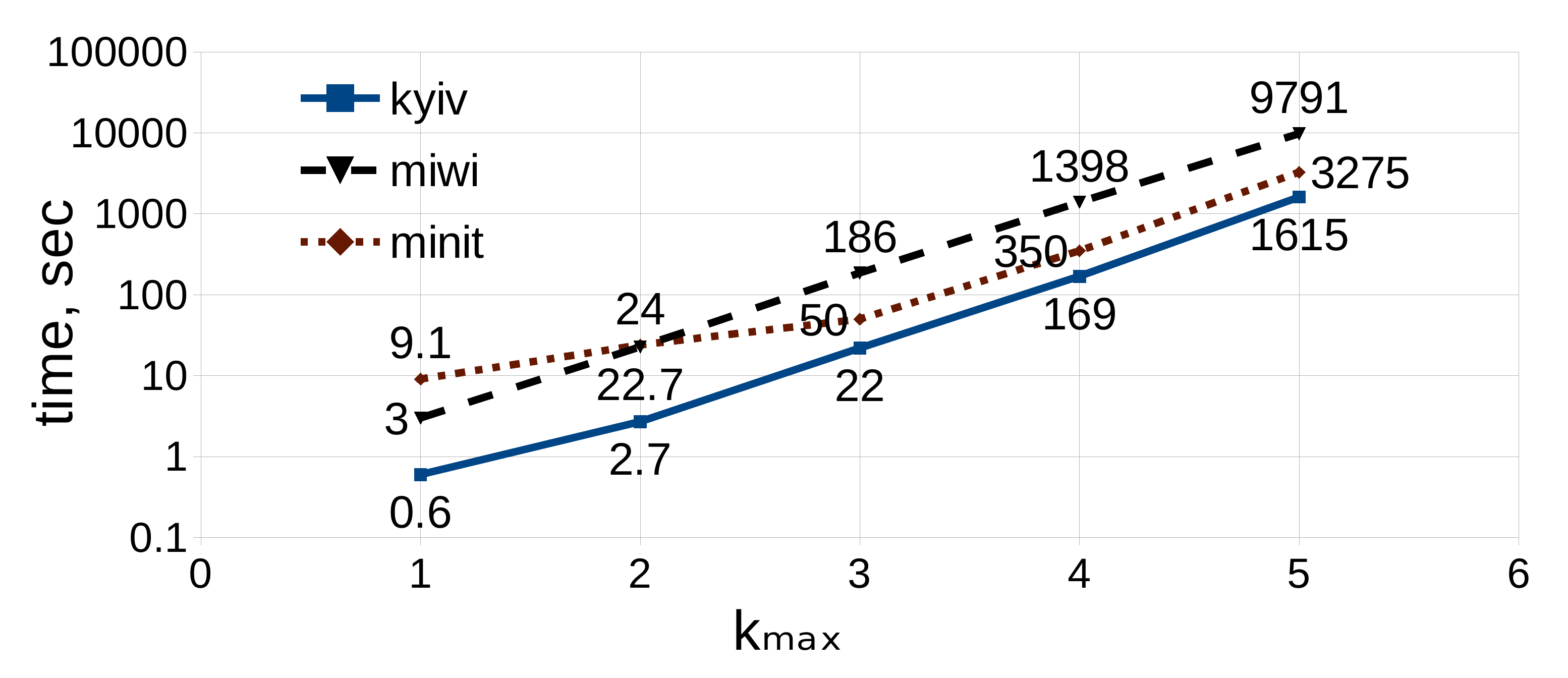}
      \label{fig14}
    }
    \subfloat[$\tau = 100$]{
      \includegraphics[width=0.485\textwidth]
      {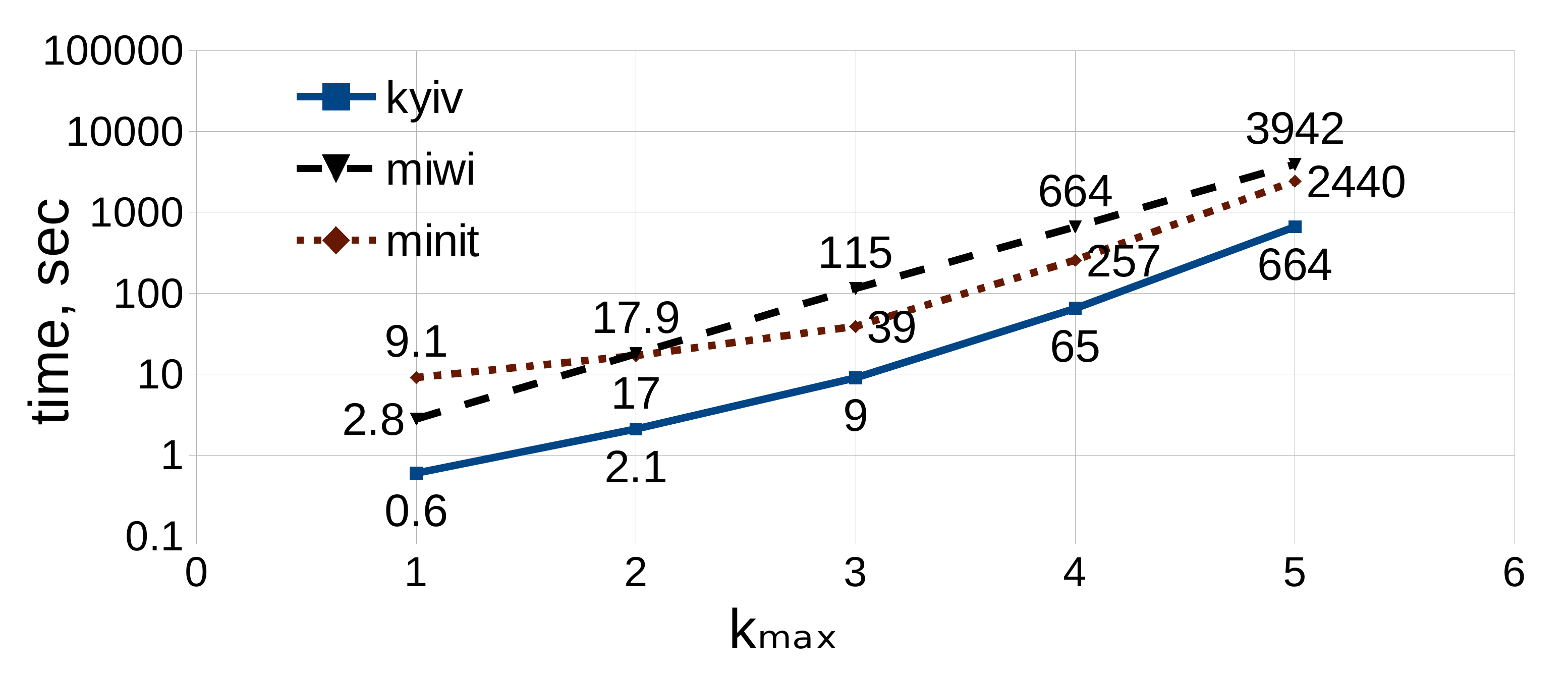}
      \label{fig15}
    }
    \captionof{figure}{Execution time vs $k_{max}$ for Pumsb dataset.}
    \label{fig12all}
  \end{figure}

  \begin{figure}
    \centering
    \subfloat[$\tau = 1$]{
      \includegraphics[width=0.485\textwidth]
      {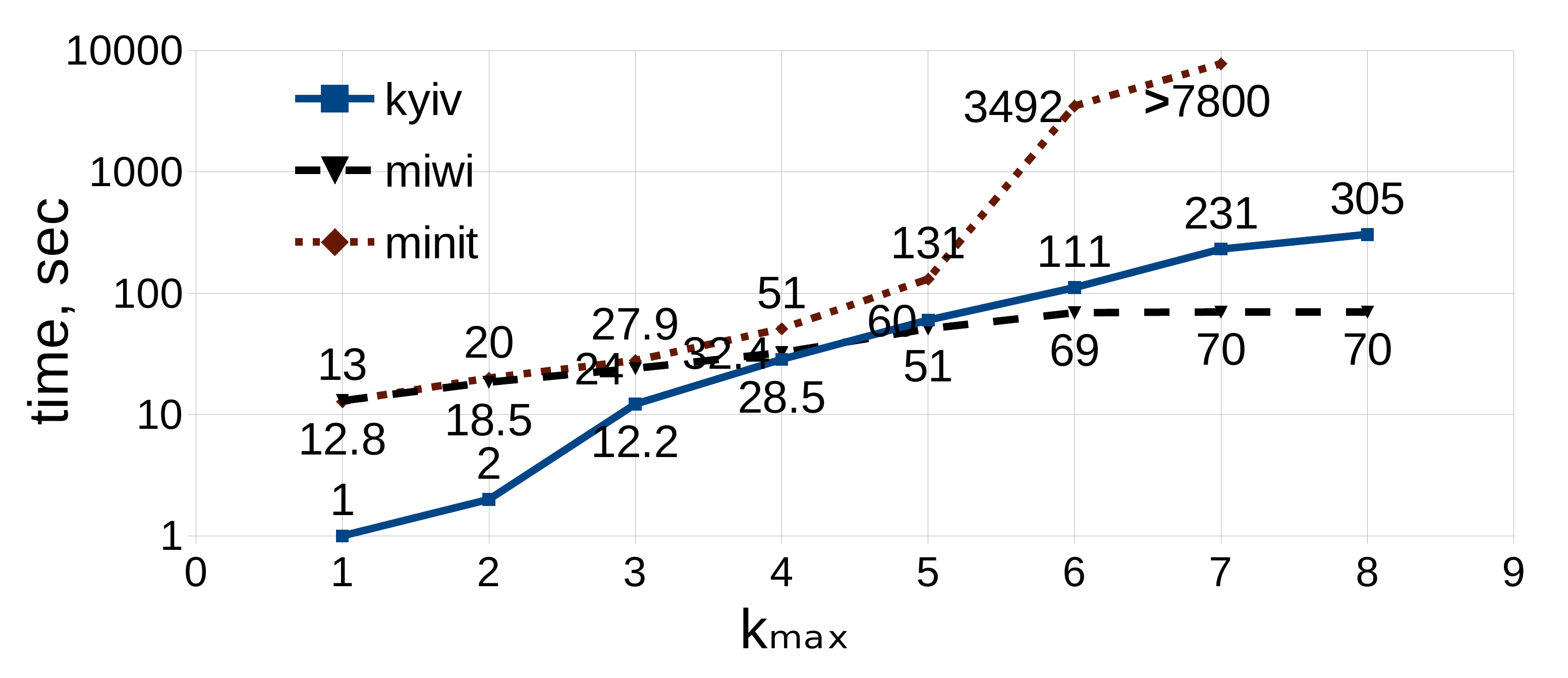}
      \label{fig16}
    }
    \subfloat[$\tau = 5$]{
      \includegraphics[width=0.485\textwidth]
      {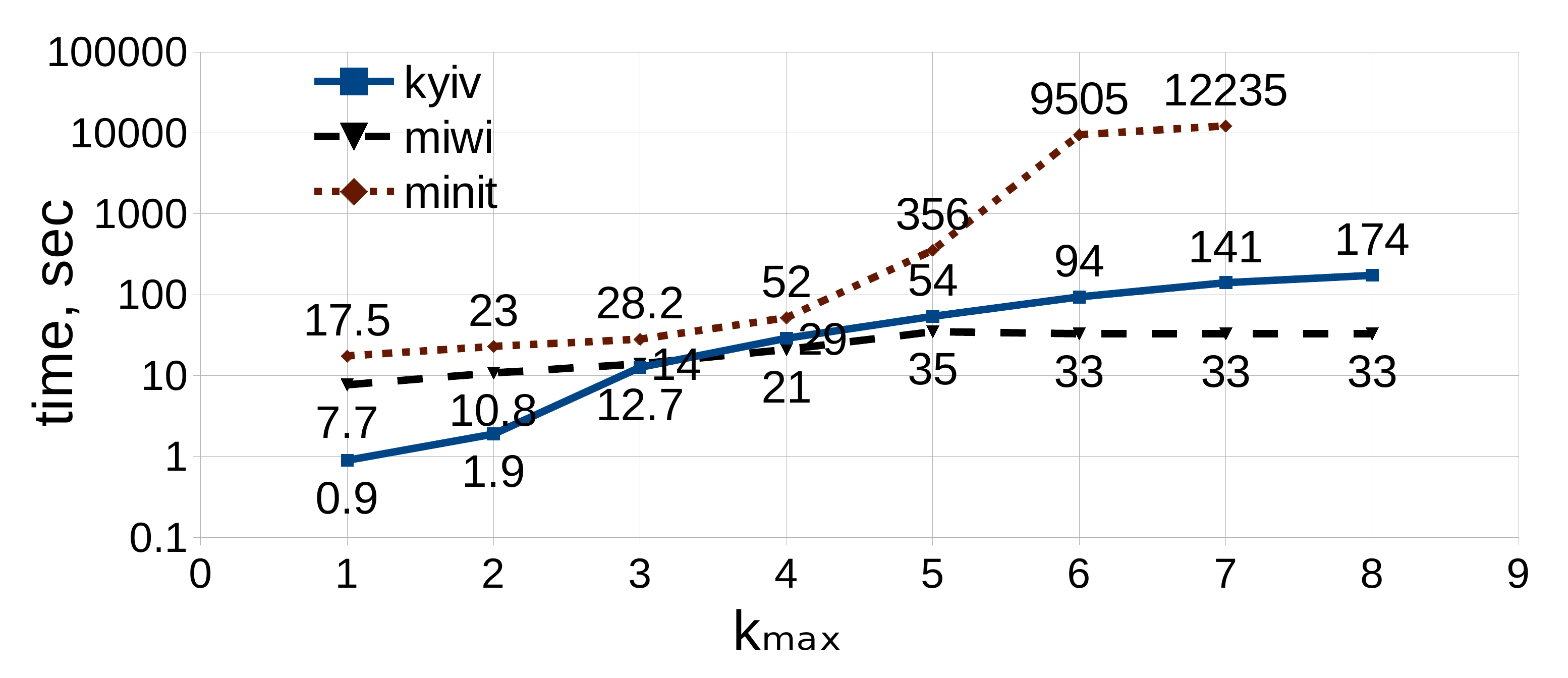}
      \label{fig17}
    }\\
    \subfloat[$\tau = 10$]{
      \includegraphics[width=0.485\textwidth]
      {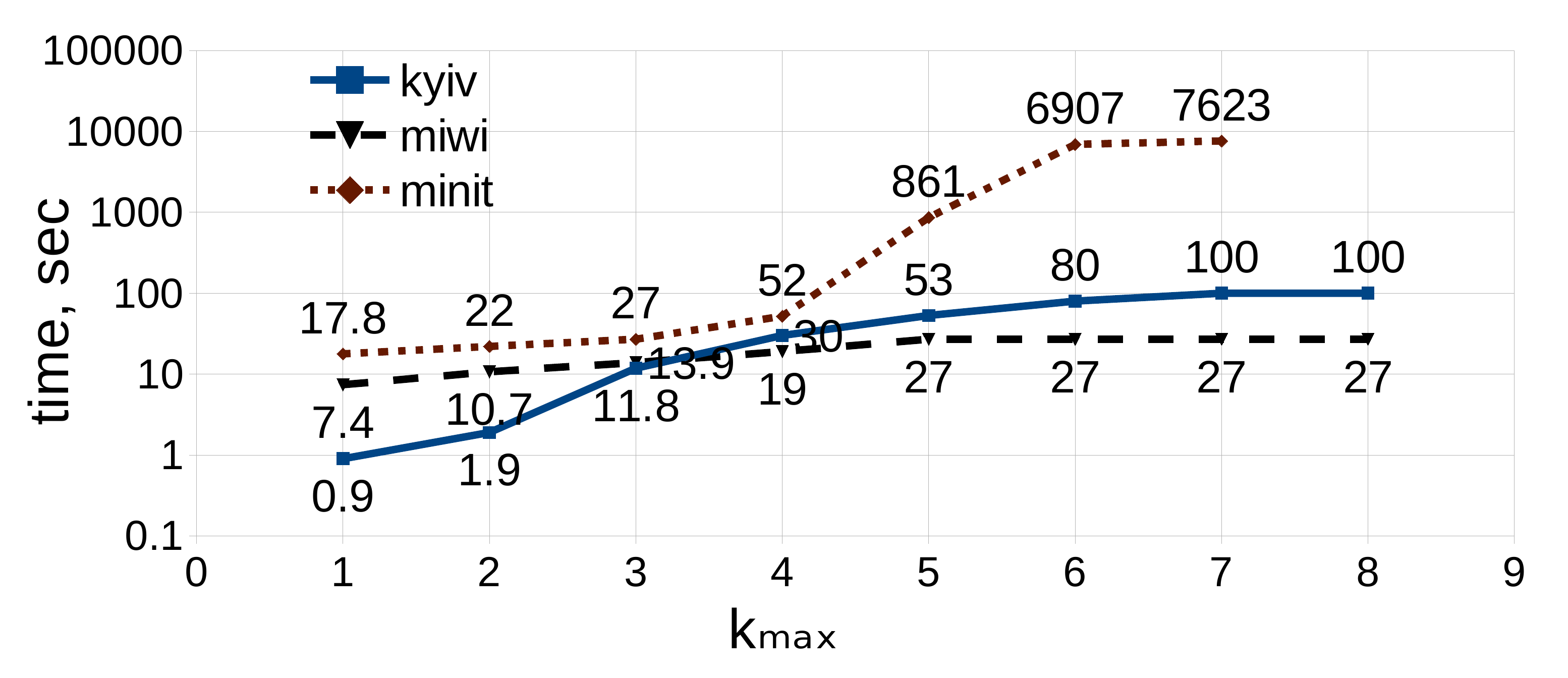}
      \label{fig18}
    }
    \subfloat[$\tau = 100$]{
      \includegraphics[width=0.485\textwidth]
      {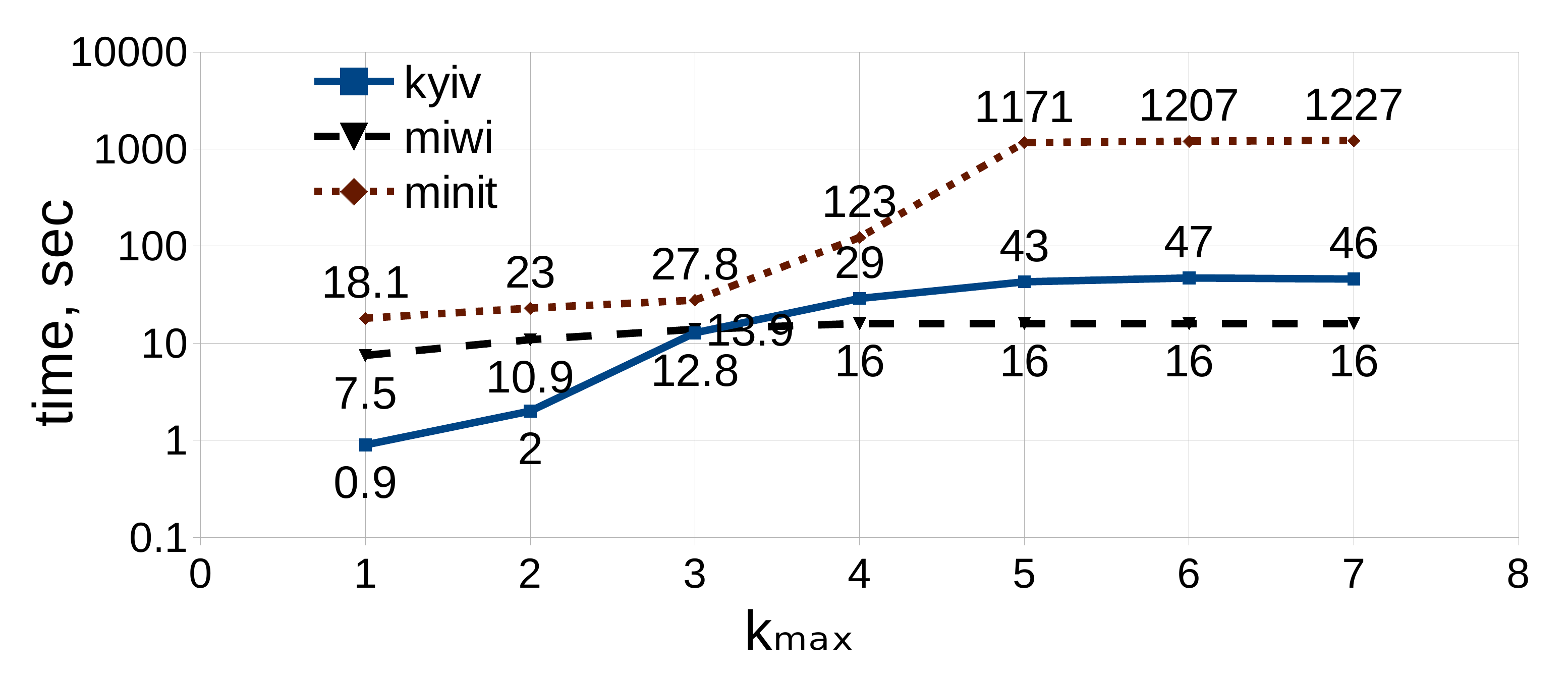}
      \label{fig19}
    }
    \captionof{figure}{Execution time vs $k_{max}$ for Poker dataset.}
    \label{fig16all}
  \end{figure}

  \begin{figure}
    \centering
    \subfloat[$\tau = 1$]{
      \includegraphics[width=0.485\textwidth]
      {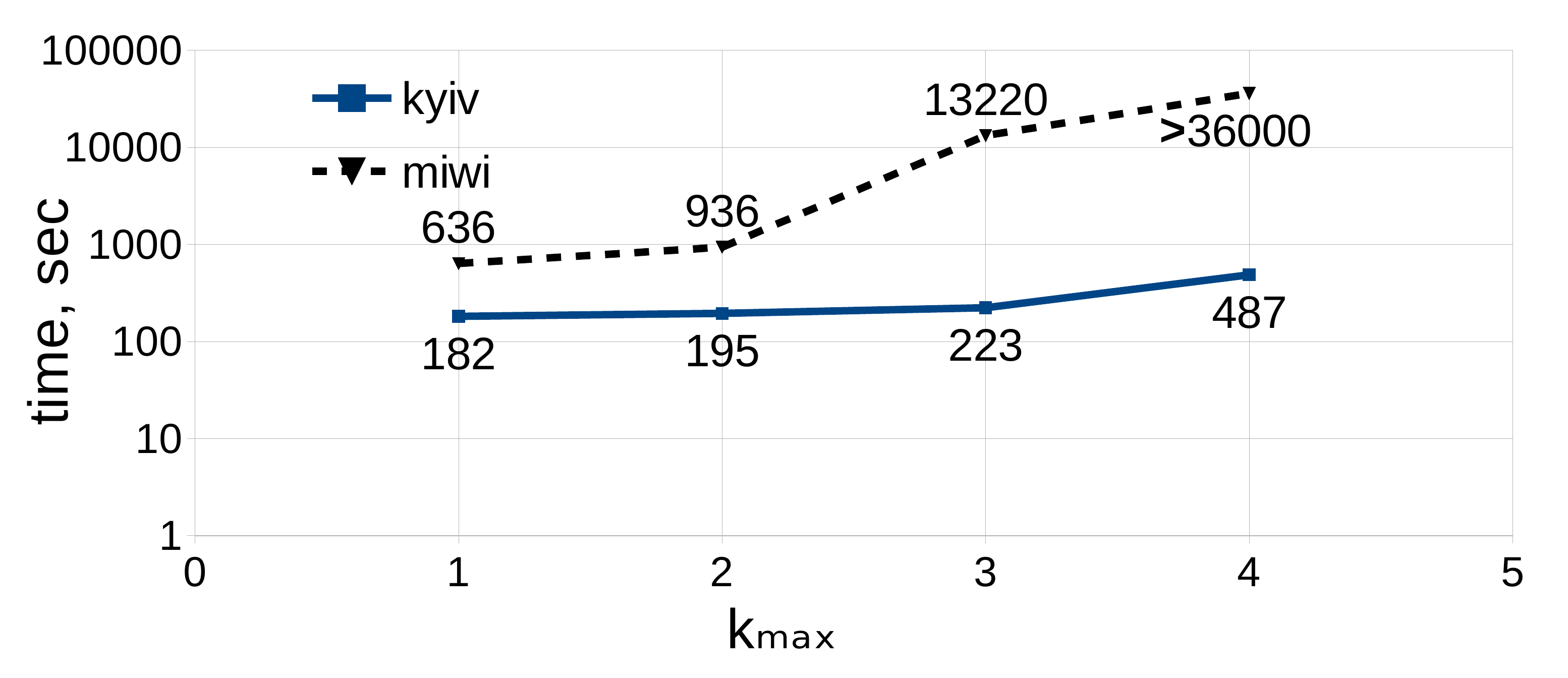}
      \label{fig20}
    }
    \subfloat[$\tau = 5$]{
      \includegraphics[width=0.485\textwidth]
      {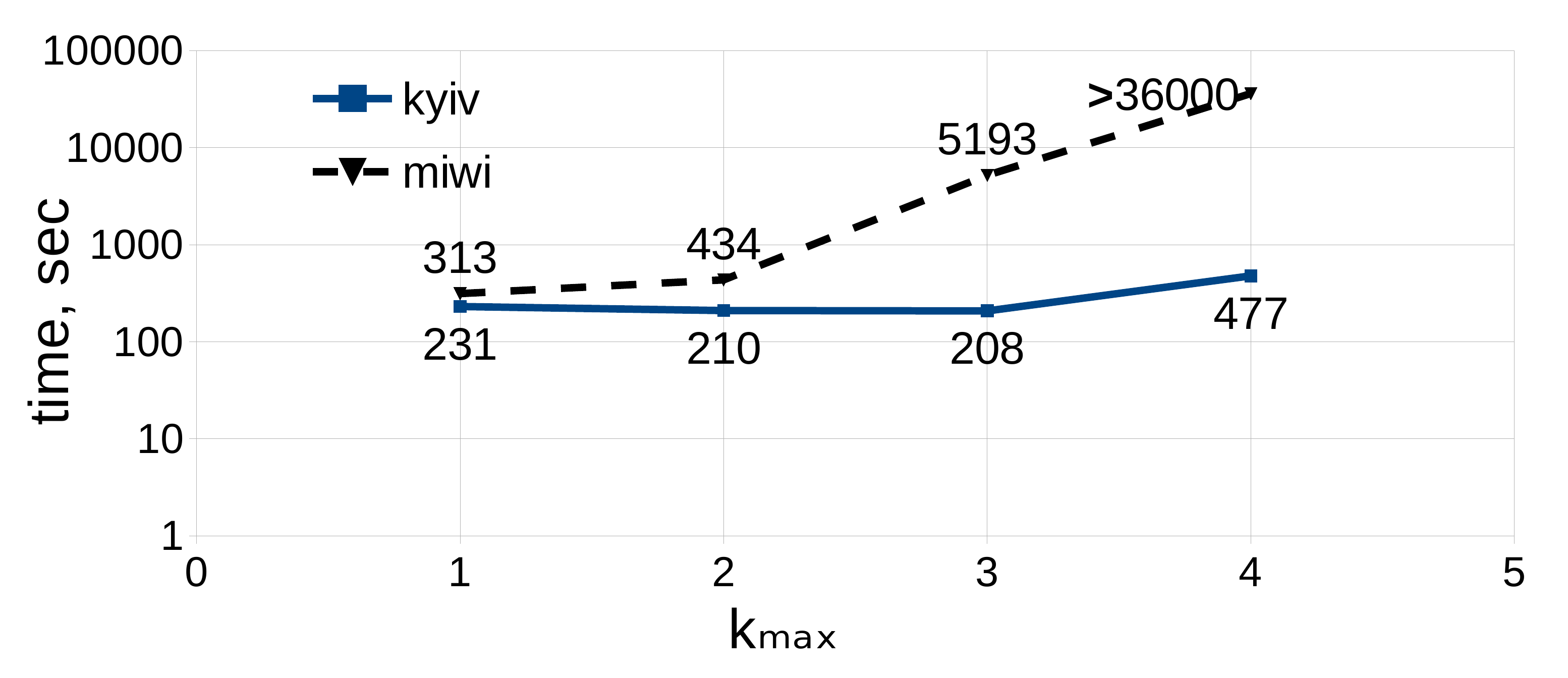}
      \label{fig21}
    }\\
    \subfloat[$\tau = 10$]{
      \includegraphics[width=0.485\textwidth]
      {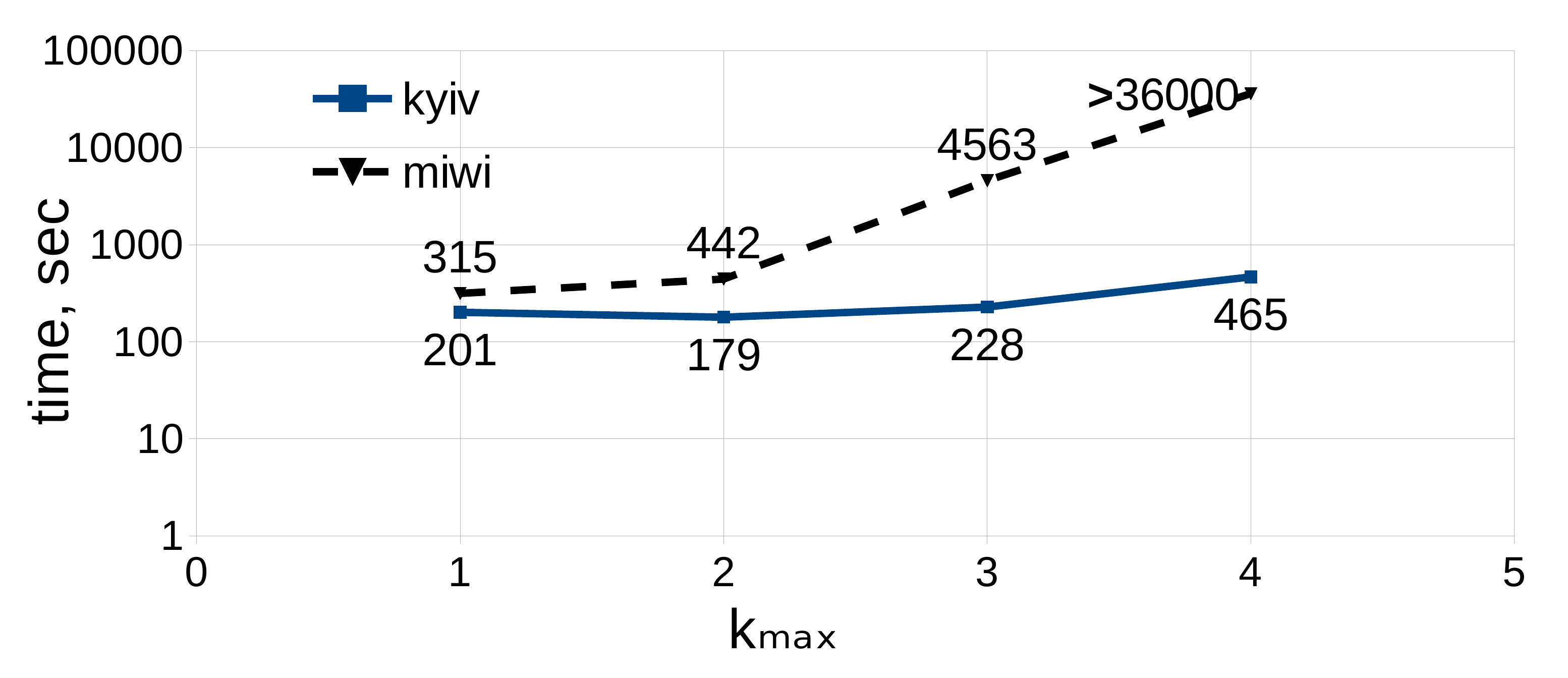}
      \label{fig22}
    }
    \subfloat[$\tau = 100$]{
      \includegraphics[width=0.485\textwidth]
      {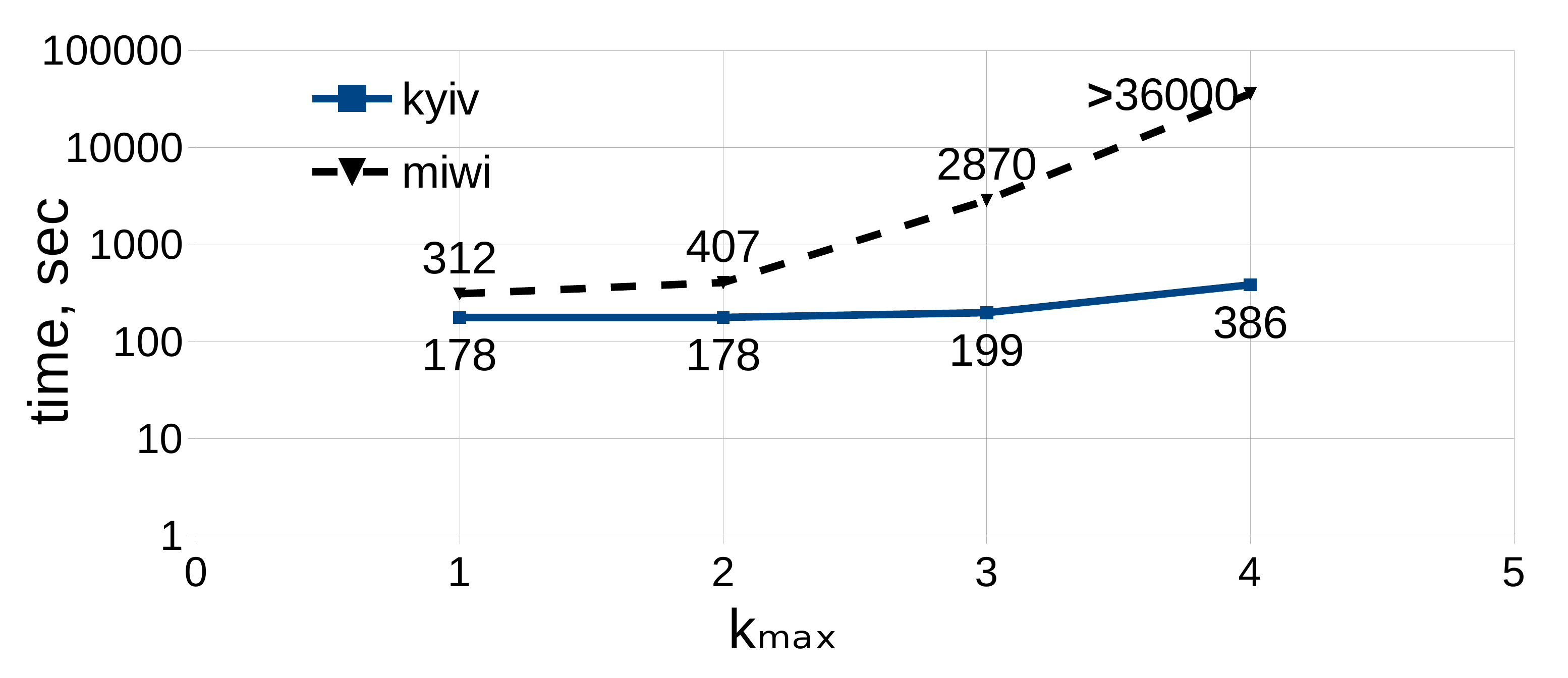}
      \label{fig23}
    }
    \captionof{figure}{Execution time vs $k_{max}$ for USCensus1990 dataset.}
    \label{fig20all}
  \end{figure}

  Figures \ref{fig8all}, \ref{fig12all}, \ref{fig16all} and \ref{fig20all} show
  the measured execution times of Algorithm \ref{al:kyiv}, MINIT and MIWI Miner
  measured for the Connect, Pumsb, Poker and USCensus1990 datasets vs $k_{max}$
  when $\tau = 1$, $5$, $10$ and $100$.

  It can be seen that Algorithm \ref{al:kyiv} consistently outperforms MINIT
  for all values of $k_{max}$ and $\tau$ and for all datasets. For the Connect
  dataset it can be seen that Algorithm \ref{al:kyiv} achieves runtimes between
  $3$ and $9$ times faster than MINIT. For the Pumsb dataset Algorithm
  \ref{al:kyiv} is between $2$ and $11$ times faster. For the Poker dataset
  Algorithm \ref{al:kyiv} is between $2$ and $33$ times faster (for $k_{max} =
  7$, $\tau = 1$ MINIT was terminated after $7,800$ seconds without
  completing). Data is not shown for the USCensus1990 dataset since both the C++
  and Java implementations of MINIT ran out of memory on this demanding dataset
  (which has $8,009$ items).

  For the Connect and Poker datasets MIWI is $2-7$ times faster than Algorithm
  \ref{al:kyiv} when $k_{max} > 4$, but MIWI is $2-9$ times slower than
  Algorithm \ref{al:kyiv} when $k_{max} \le 4$. MIWI is also $5-13$ times
  slower than Algorithm \ref{al:kyiv} for the Pumsb dataset for all values of
  $k_{max}$ (and also slower than MINIT for this dataset). For the demanding
  USCensus1990 dataset MIWI's execution time is $220$ minutes when
  $k_{max} = 3$, $\tau = 1$ and it did not complete within a reasonable time
  for $k_{max} = 4$. In comparison, Algorithm \ref{al:kyiv} finds minimal
  sample uniques for $k_{max} = 4$ in $8$ minutes while for $k_{max} = 3$ the
  execution time reduces to $3$ minutes.

  Revisiting the order analysis in Section \ref{sec:pruningperf}, we point out
  that when Algorithm \ref{al:kyiv} is run without using Lemma \ref{lem} and
  Corollary \ref{cor} then the execution time rises to $269$ seconds (from
  $130$ seconds) for the Connect dataset, $k_{max} = 6$ and to $410$ (from
  $273$ seconds) seconds for the Pumsb dataset, $k_{max} = 4$ for example.

  \subsubsection{Execution Time vs \texorpdfstring{$\tau$}{Frequency
  Threshold}}

  \begin{figure}
    \centering
    \subfloat[Connect, $k_{max} = 7$]{
      \includegraphics[width=0.485\textwidth]
      {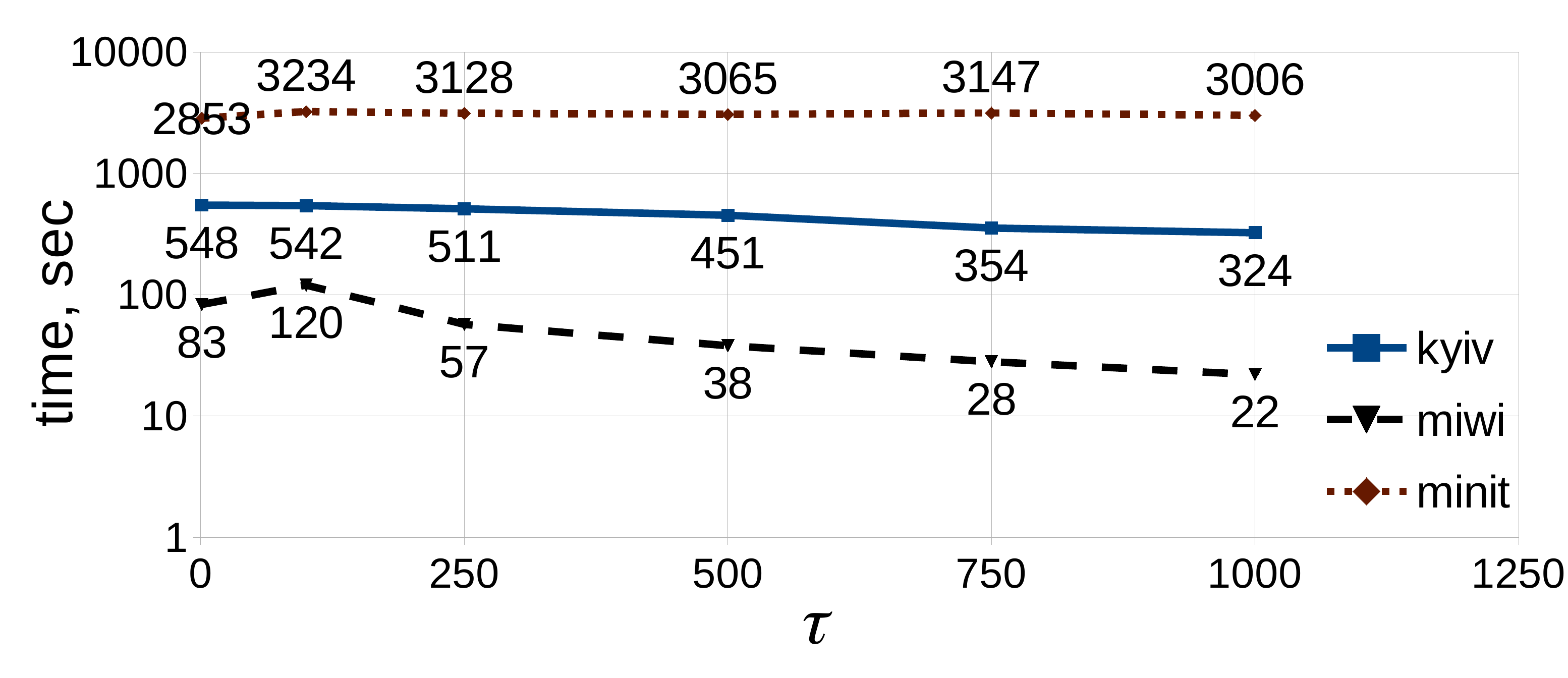}
      \label{fig24}
    }
    \subfloat[Pumsb, $k_{max} = 5$]{
      \includegraphics[width=0.485\textwidth]
      {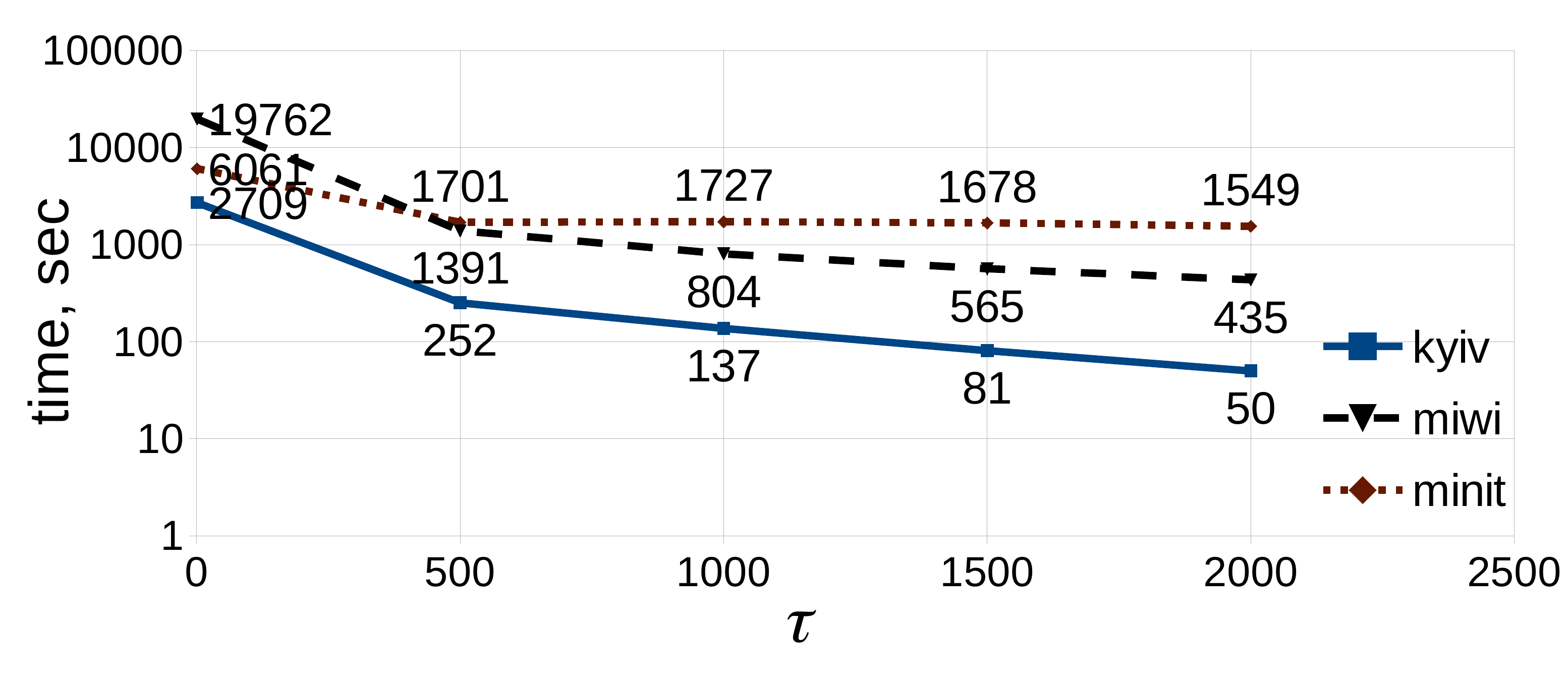}
      \label{fig25}
    }\\
    \subfloat[USCensus1990, $k_{max} = 3$, $\tau \in \{ 1, 250, 500, 750,
    1000, 2500,$ $5000, 10000 \}$]{
      \includegraphics[width=0.6\textwidth]
      {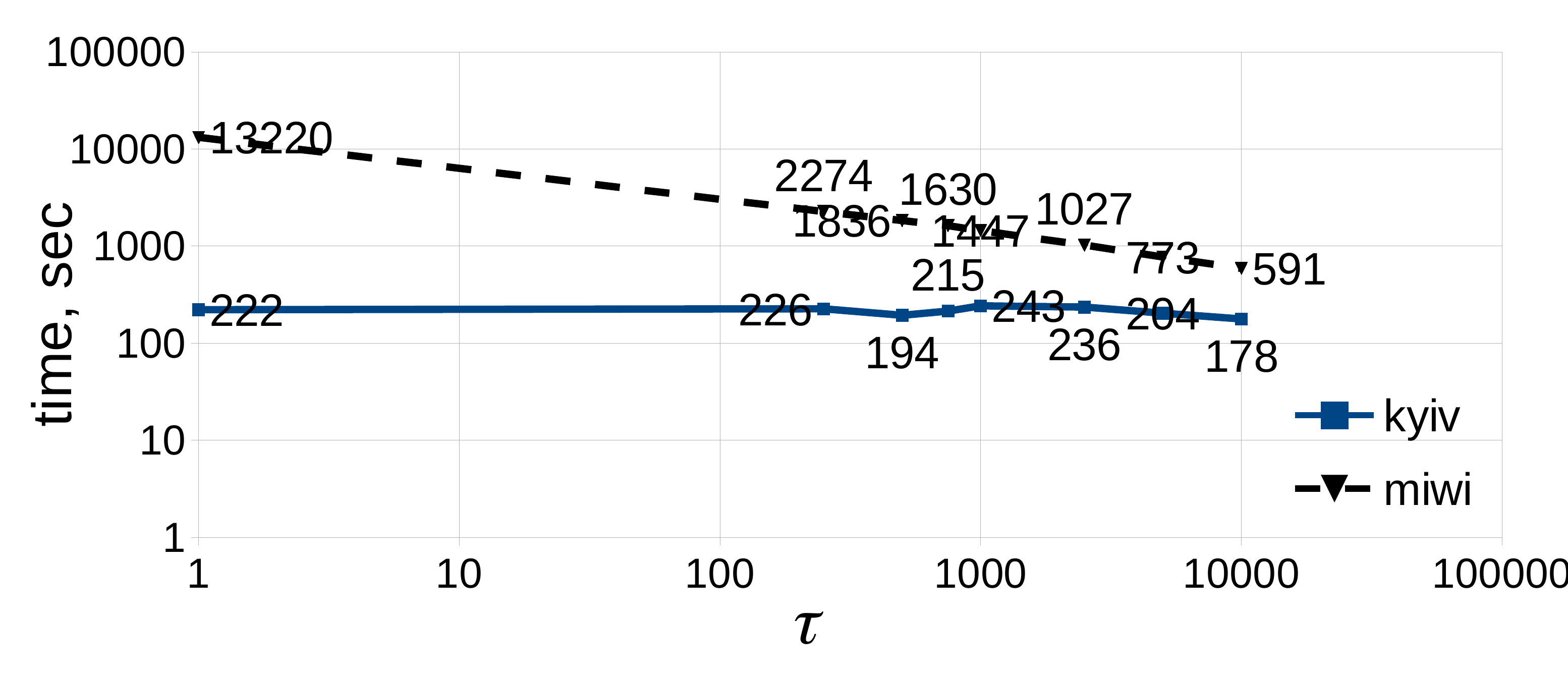}
      \label{fig27}
    }
    \captionof{figure}{Execution time vs $\tau$.}
    \label{fig24all}
  \end{figure}

  From Figures \ref{fig8all}, \ref{fig12all}, \ref{fig16all} and \ref{fig20all}
  it can be seen that the execution time of all algorithms tends to fall with
  increasing $\tau$. That is, finding minimal unique itemsets is more
  demanding that finding infrequent itemsets, as might be expected. This is
  studied in more detail in Figure \ref{fig24all} which plots the measured
  execution times vs $\tau$.

  It can be seen from Figure \ref{fig24} that MINIT's execution time initially
  increases with $\tau$ (see \cite{minit} where similar behaviour is reported),
  and then later falls as $\tau$ is increased further. Similarly, the execution
  time of MIWI also increases initially. We think that these initial increases
  are caused by the design of the algorithm and not by the dataset complexity
  since it is not present for Algorithm \ref{al:kyiv}.

  For this relatively simple dataset MIWI offers the shortest execution time.
  However, for the more complex Pumsb and USCensus1990 datasets it can be seen
  that Algorithm \ref{al:kyiv} offers the shortest execution time, although the
  performance gap between MIWI and Algorithm \ref{al:kyiv} narrows for large
  $\tau$ with the USCensus1990 dataset.

  To summarise, we conclude that Algorithm \ref{al:kyiv}'s execution time tends
  to decrease with $\tau$, its comparative performance with the MIWI and MINIT
  algorithms is approximately $\tau$-invariant and Algorithm \ref{al:kyiv}
  performs best when the input dataset is computationally expensive (such as
  the Pumsb or USCensus1990 datasets).

  \subsubsection{Memory Usage}

  \begin{figure}
    \centering
    \includegraphics[width=0.5\textwidth]
    {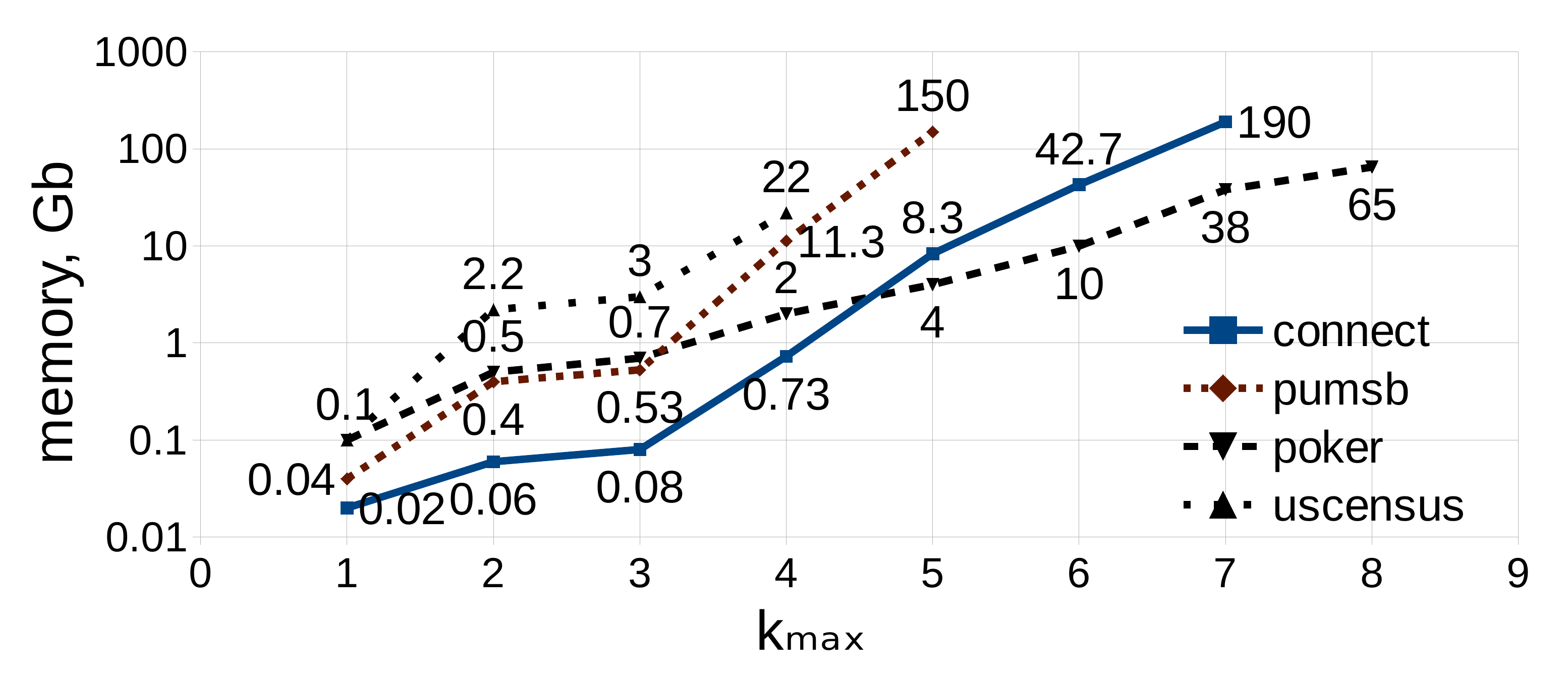}
    \captionof{figure}{Memory consumption of Algorithm \ref{al:kyiv} vs
    $k_{max}$, $\tau = 1$.}
    \label{fig28}
  \end{figure}

  Algorithm \ref{al:kyiv} intentionally trades increased memory for faster
  execution times via its use of a breadth-first approach. This is reasonable
  in view of the favourable scaling of memory size vs CPU speed on modern
  hardware. Figure \ref{fig28} shows the memory consumption of Algorithm
  \ref{al:kyiv} for the Connect, Pumsb, Poker and USCensus1990 datasets vs
  $k_{max}$. These plots indicate the maximum memory needed during algorithm
  execution and so this amount of memory ensures the fastest execution time
  since garbage collection is not required. For smaller amounts of memory the
  algorithm is observed to become somewhat slower as the Java Virtual Machine
  needs to start garbage collection.

  The memory requirement is dominated by storage of itemset rows to perform
  intersection. When $1 < k < k_{max}$, two levels of the prefix tree must be
  stored, but when $k = k_{max}$ (last level), then only one level needs to be
  stored (for example, the $190$Gb in Figure \ref{fig28} is mostly occupied by
  the $6$-itemset rows). Note that there is a level in the prefix tree that
  requires the largest amount of memory, a sort of equator. Above this value
  Algorithm \ref{al:kyiv} can compute all minimal unique itemsets without
  additional memory.

  \subsection{Parallel Algorithm Performance}

  \begin{figure}
    \centering
    \subfloat[Connect, $k_{max} = 6$, $\tau = 1$]{
      \includegraphics[width=0.485\textwidth]
      {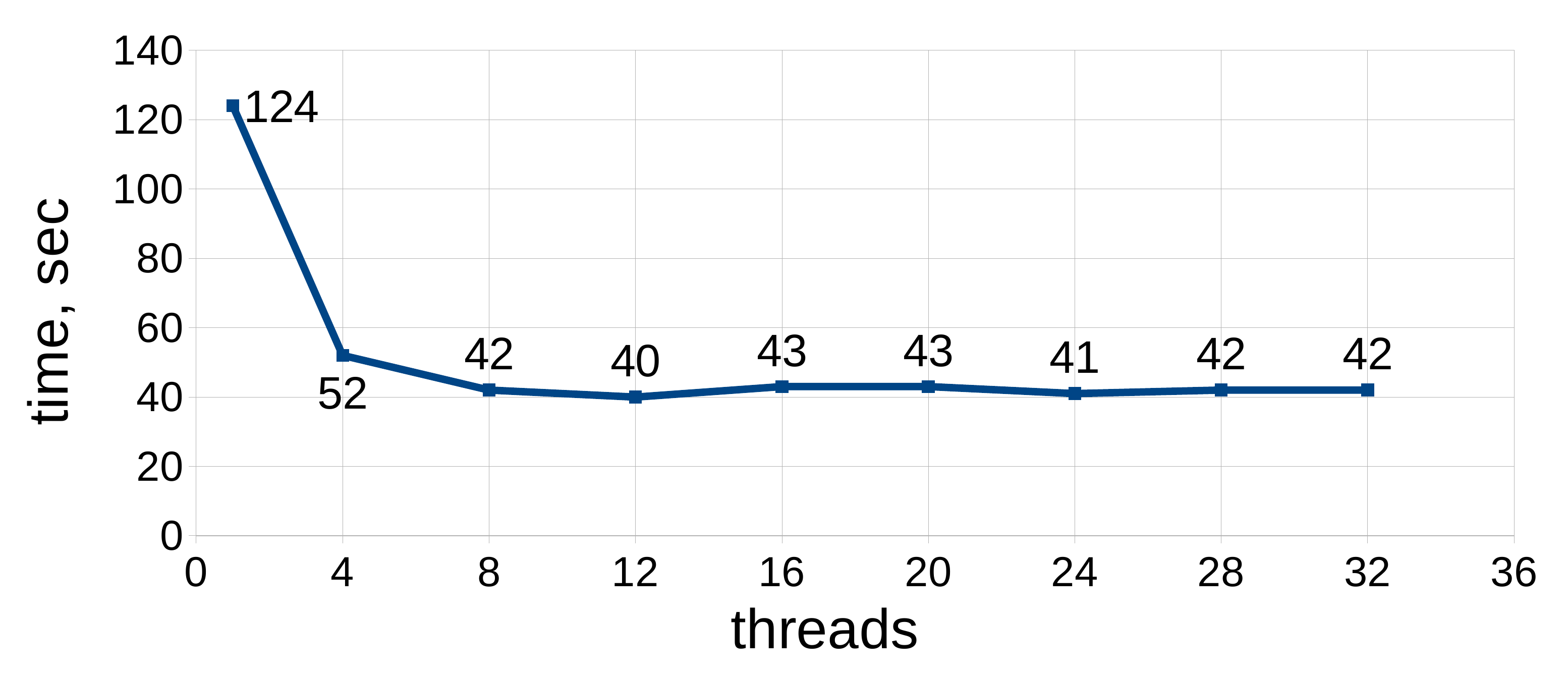}
      \label{fig29}
    }
    \subfloat[Pumsb, $k_{max} = 5$, $\tau = 1$]{
      \includegraphics[width=0.485\textwidth]
      {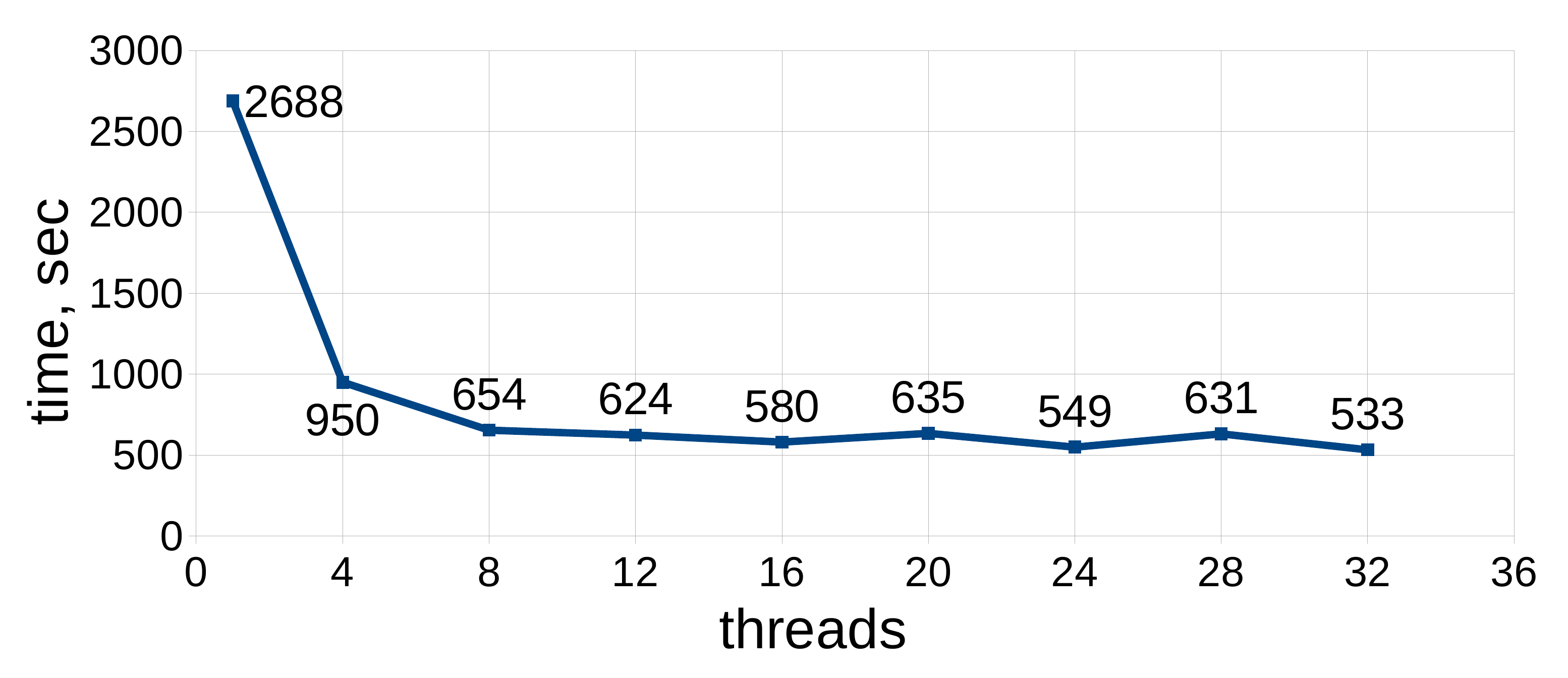}
      \label{fig30}
    }
    \captionof{figure}{Parallel algorithm execution time vs number of threads.}
  \end{figure}

  Figures \ref{fig29} and \ref{fig30} show execution time versus the number of
  threads used for the Connect and Pumsb datasets respectively. It can be seen
  that at around $8$ threads the performance saturates and additional threads
  yielding little further performance gain.

  \begin{table}[t]
    \centering
    \begin{tabular}{c c c c c}
    \hline \\ [-1ex]
    T & thread 1 & thread 2 & thread 3 & thread 4 \\ [1ex]
    \hline \\ [-1ex]
    & & $k = 3$ & & \\ [0.5ex]
    871 & 24 & 24 & 24 & 24 \\ [0.5ex]
    & & $k = 4$ & & \\ [0.5ex]
    871 & 340 & 344 & 343 & 342 \\ [0.5ex]
    & & $k = 5 = k_{max}$ & & \\ [0.5ex]
    871 & 468 & 501 & 470 & 482 \\ [0.5ex]
    \hline \\ [-1ex]
    \end{tabular}
    \caption{Granularity of $4$ threads for Pumsb, $k_{max} = 5$, $\tau = 1$.
    Time is given in seconds, levelwise. T column shows the whole execution
    time.}
    \label{tab1}
  \end{table}

  \begin{table}[t]
    \centering
    \begin{tabular}{c c c c c c c c c}
    \hline \\ [-1ex]
    T & t1 & t2 & t3 & t4 & t5 & t6 & t7 & t8 \\ [1ex]
    \hline \\ [-1ex]
    & & & & $k = 3$ & & & & \\ [0.5ex]
    674 & 21 & 17 & 19 & 21 & 21 & 21 & 19 & 21 \\ [0.5ex]
    & & & & $k = 4$ & & & & \\ [0.5ex]
    674 & 352 & 284 & 354 & 352 & 285 & 291 & 351 & 352 \\ [0.5ex]
    & & & & $k_{max}$ & & & & \\ [0.5ex]
    674 & 297 & 281 & 293 & 293 & 294 & 282 & 289 & 282 \\ [0.5ex]
    \hline \\ [-1ex]
    \end{tabular}
    \caption{Granularity of $8$ threads for Pumsb, $k_{max} = 5$, $\tau = 1$.
    Time is given in seconds, levelwise. T column shows the whole execution
    time.}
    \label{tab2}
  \end{table}

  \begin{table}[t]
    \centering
    \begin{tabular}{c c c c c c c c c}
    \hline \\ [-1ex]
    T & t1 & t2 & t3 & t4 & t5 & t6 & t7 & t8 \\ [1ex]
    \hline \\ [-1ex]
    & & & & $k = 3$ & & & & \\ [0.5ex]
    567 & 20 & 19 & 19 & 20 & 19 & 19 & 20 & 20 \\ [0.5ex]
    & & & & $k = 4$ & & & & \\ [0.5ex]
    567 & 342 & 345 & 258 & 345 & 342 & 333 & 260 & 346 \\ [0.5ex]
    & & & & $k_{max}$ & & & & \\ [0.5ex]
    567 & 178 & 171 & 177 & 171 & 170 & 170 & 179 & 179 \\ [0.5ex]
    \hline \\ [-1ex]
    T & t9 & t10 & t11 & t12 & t13 & t14 & t15 & t16 \\ [1ex]
    \hline \\ [-1ex]
    & & & & $k = 3$ & & & & \\ [0.5ex]
    567 & 20 & 19 & 19 & 19 & 20 & 19 & 19 & 19 \\ [0.5ex]
    & & & & $k = 4$ & & & & \\ [0.5ex]
    567 & 270 & 272 & 272 & 345 & 271 & 272 & 342 & 345 \\ [0.5ex]
    & & & & $k_{max}$ & & & & \\ [0.5ex]
    567 & 178 & 177 & 171 & 172 & 172 & 177 & 177 & 177 \\ [0.5ex]
    \hline \\ [-1ex]
    \end{tabular}
    \caption{Granularity of $16$ threads for Pumsb, $k_{max} = 5$, $\tau = 1$.
    Time is given in seconds, levelwise. T column shows the whole execution
    time.}
    \label{tab3}
  \end{table}

  In more detail, tables \ref{tab1}, \ref{tab2} and \ref{tab3} show the per
  thread execution times together with the overall execution time. Data is
  shown for $4$, $8$ and $16$ threads measured for the Pumsb dataset, $k_{max}
  = 5$, $\tau = 1$. It can be seen that the thread execution times consistently
  have a narrow spread, indicating that the workload is divided evenly amongst
  the threads. That is, there is not one slow thread which dominates parallel
  execution time. Observe also that the execution times in the last row of each
  table (when $k = 5 = k_{max}$) decrease as the number of threads is increased
  but that the maximum thread execution times when $k = 3$ and $k = 4$ do not
  show a similar decrease. This may be due to the communication overhead when
  transitioning between layers in the search tree, although we leave detailed
  analysis of this to future work.

\section{Summary and Conclusions} \label{sec:concl}

  A new algorithm for finding quasi-identifiers within a data set is
  introduced, where a quasi-identifier is a subset of attributes that can
  uniquely identify data set records (or identify that a record lied within a
  small group of $\tau$ records). This algorithm is demonstrated to be
  substantially faster than the state of the art, to scale well to large data
  sets and to be amenable to parallelisation with well-balanced thread
  execution times.

  \subsection{Further Improvements and Optimisation}

  We briefly highlight areas where further efficiency gains may be possible,
  although we leave these as future work.

  Regarding memory usage, suppose Kyiv that is able to compute the
  $k^*$-itemsets by intersecting the $(k^* - 1)$-itemsets but that the
  algorithm goes out of memory at the $k^* + 1$ level. We might keep
  intersecting the $(k^* - 1)$-itemsets in order to find not only the
  $k^*$-itemsets, but also the $(k^* + \delta)$-itemsets, where $\delta \in
  \mathbb{N}$ at each consecutive level of the prefix tree. This would allow us
  to halt growth in memory usage as this is mainly used for itemset storage.
  Related technical refinements could be to implement the corresponding itemset
  test using the $(k^* - 1)$-itemsets and to use data compression for the array
  storage to decrease the memory consumption, albeit at the cost of increased
  execution time.

  Regarding data structures, it would be useful to get a better understanding
  of the most efficient structures for storing the prefix tree and handling the
  search space operations. The insights gained might improve the parallel form
  of the algorithm. One possible direction would be to look at an array
  implementation of a tree structure representation, e.g. similar to the work
  in \cite{fpgrowth2}.

  The main computational bottleneck, the intersection operation, could
  potentially be improved by making use of the specialised SSE (Streaming SIMD
  Extensions) instructions available on Intel processors. There exists
  performance analysis \cite{intelsse} indicating that use of these
  instructions might produce a $4\times$ speed up.

  \bibliographystyle{ACM-Reference-Format-Journals}
  \bibliography{kyiv}

\end{document}